\documentclass[smallcondensed]{svjour3} 
  \smartqed
  \journalname{Journal of Automated Reasoning}

\usepackage{amssymb,amsmath}

\usepackage{tikz}
  \usetikzlibrary{calc}
  \usetikzlibrary{shadows}
\usepackage{caption}
\usepackage{subcaption}

\usepackage[final]{listings}
  \lstdefinestyle{vcc}{
    language=C,
    basicstyle=\sffamily,
    columns=fullflexible,
    keepspaces,
    mathescape=true,
    morekeywords={ghost, invariant, requires, ensures, maintains,
      returns, reads, writes, assert, assume, record, def, datatype,
      decreases, dynamic_owns, forall, exists, lambda, 
      bool, integer, natural, true, false},
    literate=
      {forall}{$\forall$}1
      {exists}{$\exists$}1
      {==>}{$\longrightarrow$}2
      {<-->}{$\longleftrightarrow$}2
      {!!}{$\lnot$}1
      {&&&}{$\land$}1
      {|||}{$\lor$}1
      {!!=}{$\neq$}1
      {>==}{$\ge$}1
      {<==}{$\le$}1}
  \lstnewenvironment{vcclst}{\lstset{style=vcc,
    basicstyle=\sffamily\footnotesize}}{}
  \lstnewenvironment{vccquote}{\lstset{style=vcc,
    basicstyle=\sffamily\footnotesize, xleftmargin=2em}}{}
  \newcommand{\inlinevcc}[1]{\lstinline[style=vcc]^#1^}

  \lstdefinestyle{isa}{
    columns=fullflexible,
    keepspaces,
    mathescape=true,
    morekeywords={locale,fixes,assumes,shows,and,lemma,definition,in,
      type_synonym,where},
    literate=
      {"}{}0
      {\\<forall>}{$\forall$}1
      {\\<exists>}{$\exists$}1
      {\\<Longrightarrow>}{$\Longrightarrow$}2
      {\\<longrightarrow>}{$\longrightarrow$}2
      {\\<Rightarrow>}{$\Rightarrow$}1
      {\\<longleftrightarrow>}{$\longleftrightarrow$}2
      {\\<And>}{$\bigwedge$}1
      {\\<and>}{$\land$}1
      {\\<or>}{$\lor$}1
      {\\<in>}{$\in$}1
      {\\<noteq>}{$\neq$}1
      {\\<le>}{$\le$}1
      {\\<ge>}{$\ge$}1
      {\\<subseteq>}{$\subseteq$}1
      {\\<lbrakk>}{$\mathopen{\lbrack\mkern-3mu\lbrack}$}1
      {\\<rbrakk>}{$\mathclose{\rbrack\mkern-3mu\rbrack}$}1
      {\\<infinity>}{$\infty$}1
      {\\<mu>}{$\mu$}1
      {\\<Sum>}{$\sum$}1
      {'a}{$\alpha$}1
      {'b}{$\beta$}1
      {parent_num_assms}{$\mathit{parent\dash{}num\dash{}assms}$}3
      {edges}{$\mathit{edges}$}5
      {verts}{$\mathit{verts}$}5
      {start}{$\mathit{start}$}5
      {target}{$\mathit{target}$}6
      {num}{$\mathit{num}$}3
      {just}{$\mathit{just}$}3
      {dist}{$\mathit{dist}$}4
      {no_path}{$\mathit{no\dash{}path}$}3
      {pos_cost}{$\mathit{pos\dash{}cost}$}3
      {s_in_G}{$\mathit{s\dash{}in\dash{}G}$}3
      {sstart_val}{$\mathit{start\dash{}val}$}3
      {parent_edge}{$\mathit{parent\dash{}edge}$}3
      {digraph}{$\mathit{digraph}$}1
      {pseudo_digraph}{$\mathit{pseudo\dash{}digraph}$}1
      {pre_graph}{$\mathit{pre\dash{}graph}$}1
      {connected_components_locale}{$\mathit{connected\dash{}components\dash{}locale}$}3
      {shortest_path_pos_cost}{$\mathit{shortest\dash{}path\dash{}pos\dash{}cost}$}1
      {basic_just_sp}{$\mathit{basic\dash{}just\dash{}sp}$}1
      {basic_sp}{$\mathit{basic\dash{}sp}$}1
      {dist_ge_\\<mu>}{$\mathit{dist\dash{}ge\dash{}\mu}$}1
      {disjoint_edges}{$\mathit{disjoint\dash{}edges}$}1
      {matching}{$\mathit{matching}$}1
      {\\matching}{matching}1
      {maxM}{$\mathit{maxM}$}1
      {matching_locale}{$\mathit{matching\dash{}locale}$}1
      {OSC}{$\mathit{OSC}$}1
      {\\OSC}{OSC}1
      {weight}{$\mathit{weight}$}1
      {\\weight}{weight}1
      {card}{$\mathit{card}$}1
      {label}{$\mathit{label}$}3
      {bool}{$\mathit{bool}$}3
      {set}{$\mathit{set}$}3
      {enat}{$\mathit{enat}$}3
      {nat}{$\mathit{nat}$}3
      {ereal}{$\mathit{ereal}$}3
      {real}{$\mathit{real}$}3
      {option}{$\mathit{option}$}6
      {None}{$\mathit{None}$}4
      {Some}{$\mathit{Some}$}4
      {\\0}{$0$}1
      {\\1}{$1$}1
      {\\2}{$2$}1
      {\\G}{$G$}1
      {\\M}{$M$}1
      {\\L}{$L$}1
      {LV}{$LV$}1
      {\\N}{$N$}1
      {\\V}{$V$}1
      {\\e}{$e$}1
      {\\i}{$i$}1
      {\\f}{$f$}1
      {e1}{$e_1$}1
      {e2}{$e_2$}1
      {\\v}{$v$}1
      {\\u}{$u$}1
      {\\r}{$r$}1
      {\\s}{$s$}1
      {\\c}{$c$}1}
  \lstnewenvironment{isalst}{\lstset{style=isa,
    basicstyle=\footnotesize}}{}
  \newcommand{\inlineisa}[1]{\lstinline[style=isa]^#1^}

  \newcounter{listing}
  \newcounter{savefigure}
  \newenvironment{listing}
    {\setcounter{savefigure}{\value{figure}}
     \setcounter{figure}{\value{listing}}
     \begin{figure}}
    {\end{figure}
     \setcounter{figure}{\value{savefigure}}
     \addtocounter{listing}{1}}

\newcommand{\ignore}[1]{}  

\newcommand{\eg}{e.g.,\ }
\newcommand{\ie}{i.e.,\ }
\newcommand{\cf}{cf.\ }
\newcommand{\wrt}{w.r.t.\ }

\newcommand{\Pre}{{\varphi}}
\newcommand{\Post}{{\psi}}
\newcommand{\Wit}{{\cal{W}}}
\newcommand{\Nat}{{\mathbb{N}}}
\newcommand{\abs}[1]{\left\lvert #1 \right\rvert}

\newcommand{\dash}{\mbox{-}}

\newcommand{\iX}{i_X}
\newcommand{\iY}{i_Y}
\newcommand{\iW}{i_W}

%

\usepackage{pdfsetup}
\begin{document}

\title{A Framework for the Verification of Certifying Computations}

\author{
  Eyad~Alkassar \and
  Sascha~B\"ohme \and
  Kurt~Mehlhorn \and
  Christine~Rizkallah}

\authorrunning{E. Alkassar, S. B\"ohme, K. Mehlhorn, C. Rizkallah}

\institute{
  E. Alkassar \at Fachbereich Informatik, Universit\"at des Saarlandes, Germany \and
  S. B\"ohme \at Institut f\"ur Informatik, Technische Universit\"at
    M\"unchen, Germany \and
  K. Mehlhorn \and C. Rizkallah \at Max-Planck-Institut f\"ur Informatik,
    Saarbr\"ucken, Germany}


\maketitle

\begin{abstract}
Formal verification of complex algorithms is challenging. Verifying their
implementations goes beyond the state of the art of current
automatic verification
tools and usually involves intricate mathematical
theorems.  Certifying algorithms compute in addition to each output a
witness certifying that the output is correct.  A checker for such a
witness is usually much simpler than the original algorithm -- yet it is
all the user has to trust.  The verification of checkers is feasible with
current tools and leads to computations that can be completely trusted.  We
describe a framework to seamlessly verify certifying computations. We use
the automatic verifier VCC for establishing the correctness of the checker and
the interactive theorem prover Isabelle/HOL for high-level mathematical
properties of algorithms.  We demonstrate the effectiveness of our approach
by presenting the verification of typical examples of the industrial-level
and widespread algorithmic library LEDA.
\end{abstract}

\section{Introduction}

One of the most prominent and costly problems in software engineering is
correctness of software. In this article, we are concerned with software
for difficult algorithmic problems, \eg in the domain of graphs. The
algorithms for such problems are complex; formal verification of the
resulting programs is not tractable, which explains
why few graph algorithms have been verified. We show how to obtain
\emph{formal instance correctness}, \ie formal proofs that outputs for
particular inputs are correct. We do so by combining the concept of
certifying algorithms with methods for code verification and theorem
proving.

A \emph{certifying
algorithm}~\cite{blum-kannan:1989:check,sullivan-masson:1990:certification,mcconnell-etal:2011:certifying} 
produces with each output a \emph{certificate} or \emph{witness} that the
\emph{particular output} is correct. The accompanying \emph{checker} for a certifying algorithm with input $x$, output $y$, and witness $w$ takes as input the triple 
$(x,y,w)$ and accepts the triple if $w$ proves that $y$ is a
correct output for input $x$. Otherwise, the checker rejects the output or
witness as buggy\footnote{Throughout the paper, we say the checker \emph{accepts} if the checker returns $\mathit{True}$; otherwise, we say it \emph{rejects}.}. 

Figure~\ref{fig:black-box-view} contrasts a standard
algorithm with a certifying algorithm for IO-behavior $(\Pre,\Post)$. An
algorithm for IO-behavior $(\Pre,\Post)$ receives an input $x$ satisfying a
precondition $\Pre(x)$ and is supposed to deliver an output $y$ satisfying
the postcondition $\Post(x,y)$. We call such a $y$ a \emph{correct output}.
If the input does not satisfy the precondition, the result of the
computation is unspecified. A user of a standard algorithm has, in general,
no means of knowing that $y$ is a correct output and has not been compromised
by a bug. In contrast, if the accompanying checker of a certifying algorithm accepts, the user may
proceed with the complete confidence that output $y$ has not been compromised
by a bug. If the checker rejects, either $y$ is incorrect or $w$ is not a
proof of the correctness of $y$.

\begin{figure}[t]
\centering
\begin{tikzpicture}
\tikzstyle{box}=[draw, fill=white, drop shadow={opacity=0.8},
  inner xsep=8pt, minimum height=1.5cm];
\node[box,text width=2.5cm,text centered] (P) at (0,5) {Program for IO-behavior $(\Pre,\Post)$};
\draw[-latex] (-2.5,5) -- node[below] {$x$} (P);
\draw[-latex] (P) -- node[below] {$y$} (2.5,5);

\node[box, text width=2.5cm, text centered] (cP) at (-2,2)
  {Certifying program for IO-behavior $(\Pre,\Post)$};
\node[box] (C) at (2,2) {Checker $C$};
\draw[-latex]
  (-5,3.2) -- (0.2,3.2) |- node[below, pos=0.7] {$x$}
  ($(C.west) + (0,0.5)$);
\draw[-latex] (-4.3,3.2) |- node[below, pos=0.7] {$x$} (cP.west);
\draw[-latex] (cP.east) -- node[below] {$y$} (C.west);
\draw[-latex]
  ($(cP.east) - (0,0.5)$) -- node[below] {$w$}
  ($(C.west) - (0,0.5)$);
\draw[-latex] ($(C.east) + (0,0.5)$) -- node[above] {accept $y$} +(2,0);
\draw[-latex] ($(C.east) - (0,0.5)$) -- node[below] {reject} +(2,0);
\end{tikzpicture}
\caption{The top figure shows the I/O behavior of a conventional program
  for IO-behavior $(\Pre,\Post)$. The user feeds an input $x$ satisfying
  $\Pre(x)$ to the program, and the program returns an output $y$ satisfying
  $\Post(x,y)$. A certifying algorithm for IO-behavior $(\Pre,\Post)$
  computes $y$ and a witness $w$. The checker $C$ accepts the triple
  $(x,y,w)$ if and only if $w$ is a valid witness for the postcondition
  $\Post(x,y)$, i.e., it proves $\Post(x,y)$.}
  \label{fig:black-box-view} 
\end{figure}
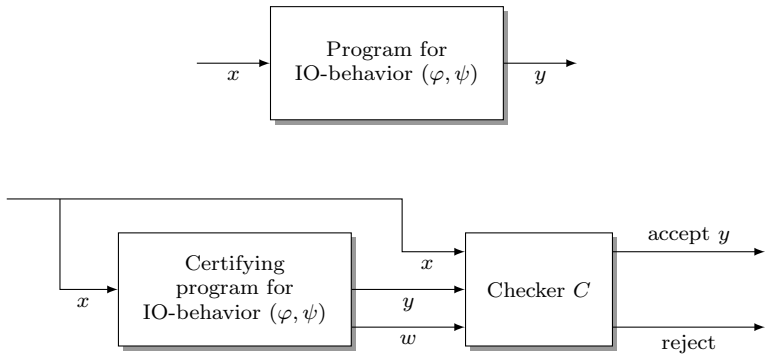


\ignore{We illustrate the concept certifying algorithm by a simple
  example. A graph is bipartite if the vertices can be colored by colors red
  and blue such that the endpoints of every edge have distinct colors. The
  output of a non-certifying algorithm is either YES or NO. A certifying
  algorithm may output a two-coloring in the YES-case and an odd-length cycle
  contained in the graph in the NO-case. An odd-length cycle can clearly not be
  two-colored and hence any graph containing an odd-length cycle cannot be
  two-colored. The checker proceeds as follows. In the YES-case, it iterates
  over all edges of the graph and checks that the endpoints have distinct
  colors. In the NO-case, it checks that all edges of the cycle are
  present in the graph and that the cycle has odd length. }

We illustrate the concept of certifying algorithms with an example. The
greatest common divisor of two nonnegative integers $a$ and $b$, not both zero, is
the largest integer $g$ that divides $a$ and $b$. We write $g =
\mathit{gcd}(a, b)$. The extended Euclidean algorithm
is a certifying algorithm for greatest common divisor. In addition to the output
$g = gcd(a, b)$, it also computes as a witness\footnote{It is known that such
  integers $s$ and $t$ always exist.} integers $s$ and
$t$ such that $g = s\cdot a+t \cdot b$. The checker checks that $g$ divides $a$
and $b$ and that $g = s\cdot a+t \cdot b$. Why does this prove that $g$ is the
greatest common divisor of $a$ and $b$? Consider any integer $d$ that divides $a$ and
$b$. Then $g = s\cdot a+t \cdot b = (s \cdot (a/d) + t \cdot (b/d)) \cdot d$, 
and hence, $d$ divides $g$.

Certifying algorithms are a key design principle of the algorithmic library
LEDA~\cite{melhorn-naeher:1999:leda}: Checkers are an integral part of the
library and may (optionally) be invoked after every execution of a LEDA
algorithm.  The adoption of the principle greatly improved the reliability of
the library.  However, how can one be sure that the checker programs are
correct? The third author used to answer:
``Checkers are simple programs with little algorithmic complexity. Hence, one
may assume that their implementations are correct." 
We give a better answer in this paper.

We take the certifying-algorithms approach a step further by developing a
methodology to verify the checkers. 
We demonstrate it on three examples:
the connectivity of graphs, single-source shortest paths in graphs with
nonnegative edge weights, and maximum cardinality matching in graphs.
The latter is one of the more complex algorithms in LEDA. The
description of the algorithm and its implementation
in~\cite{melhorn-naeher:1999:leda} comprises 15 pages. In contrast, the
checker fits on one page.  Our formalization
effort has revealed that the checker
program in LEDA is incorrect in that it does not check that the matching of a graph is a subset of the edges of the graph (see Section~\ref{sec:case-study-matching} for a definition of matching).

We introduce our methodology in Section~\ref{sec:outline-of-approach} and
give detailed case studies in Section~\ref{sec:case-studies} before
evaluating our approach and the obtained results in
Section~\ref{sec:evaluation}.  In Section~\ref{sec:tools}, we survey the
verification tools VCC and Isabelle/HOL. Section~\ref{sec:related-work}
discusses related work, and Section~\ref{sec:conclusions} offers
conclusions. The companion web page
\url{http://www21.in.tum.de/~boehmes/certifying\_computations.html}
contains additional material, in particular, the program listings including
VCC annotations and the Isabelle/HOL proofs.

This article is a revised and extended version of a paper published by the
same authors at CAV~2011~\cite{alkassar:2011:vcc}.  We have added two case
studies to underline the feasibility and elegance of our approach.
Moreover, we have strengthened and simplified our approach. We now prove
the total
correctness of the checkers instead of only partial correctness. Furthermore, we 
establish that the checker accepts a triple $(x,y,w)$ if and only if $w$ is
a valid witness for output $y$. Previously, we only proved the if direction. 
The simplification results from having only one kind of specification in VCC 
instead of two. Furthermore, a new version of VCC allows for shorter specifications and proofs.

\section{Outline of Methodology}
  \label{sec:outline-of-approach}

We consider algorithms that take an input from a set $X$ and produce an
output in a set $Y$ and a witness in a set $W$. The input $x \in X$ is
supposed to satisfy a precondition $\Pre(x)$, and the input together with
the output $y \in Y$ is supposed to satisfy a postcondition $\Post(x,y)$. 
A \emph{witness predicate} for a specification with precondition~$\Pre$ and
postcondition $\Post$ is a predicate $\Wit \subseteq X \times Y \times W$,
where $W$ is a set of witnesses
with the following \emph{witness property}:

\begin{equation}
\Pre(x) \land \Wit(x,y,w) \longrightarrow \Post(x,y).
\label{witness-property}
\end{equation}

\noindent
In contrast to algorithms that work on abstract sets $X$, $Y$, and $W$,
the implementing programs operate on concrete representations of
abstract objects.  We use $\overline{X}$, $\overline{Y}$, and
$\overline{W}$ for the set of representations of objects in $X$, $Y$, and
$W$, respectively, and assume the mappings $\iX: \overline{X} \rightarrow X$,
$\iY: \overline{Y} \rightarrow Y$, and $\iW: \overline{W} \rightarrow
W$.

We illustrate these definitions through the example from the introduction. 
In the case of greatest common divisors, $X$ and $W$ are the set of pairs of
integers, and $Y$ is the set of integers. For input $(a,b)$, output $g$ and
witness $(s,t)$, the precondition $\Pre( (a,b) )$ states that the inputs
$a$ and $b$ are nonnegative integers and that at least one of them is not
zero. The postcondition $\Post((a,b), g)$ states that $g =
\mathit{gcd}(a,b)$. The witness predicate $\Wit((a,b),g,(s,t))$ states that
$g = s a+t b$ and that $g$ divides $a$ and $b$. A typical
representation of integers in implementations is via bitstrings. Hence, $\overline{X}$ and
$\overline{W}$ are each the set of pairs of bit strings, and
$\overline{Y}$ is the set of bit
strings. The mappings $\iX$, $\iY$, and $\iW$ map {(pairs of)} bit strings to
the corresponding {(pairs of)} integers.

The checker program $C$ receives a triple
$(\overline{x},\overline{y},\overline{w})$ and is supposed to
check whether
it fulfills the witness property. More precisely, let  $x =
\iX(\overline{x})$, $y = \iY(\overline{y})$, and $w = \iW(\overline{w})$.
If $\neg \Pre(x)$, $C$ may do anything (run forever or halt with an
arbitrary output). If $\Pre(x)$, $C$ must halt and either accept or reject.
It is required to accept if $\Wit(x,y,w)$
holds and is required to reject otherwise.  

\begin{description}
\item[Checker Correctness:]
  We need to prove that $C$ checks the witness predicate, 
  assuming that the precondition\footnote{We stress that the checker has the same
  precondition as the algorithm.} holds, \ie on input $(\overline{x},
  \overline{y}, \overline{w})$ and with $x = \iX(\overline{x})$, $y =
  \iY(\overline{y})$, and $w = \iW(\overline{w})$:

  \begin{enumerate} 
  \item If $\Pre(x)$, $C$ halts.
  \item If $\Pre(x)$ and $\Wit(x,y,w)$, $C$ accepts $(x,y,w)$, and if $\Pre(x)$ and
    $\neg \Wit(x,y,w)$, $C$ rejects the triple. 
  \end{enumerate}

\item[Witness Property:]
  We need to prove implication~(\ref{witness-property}).
\end{description}

In our running example, the witness property is
\[  a \ge 0 \wedge b \ge 0 \wedge a + b > 0 \wedge g | a \wedge g | b \wedge g
= a \cdot s + b \cdot t \implies g = \mathit{gcd}(a,b).  \]
Here $a$, $b$, $g$, $s$ and $t$ are assumed to be nonnegative integers.

\begin{theorem}
Assume that the proof obligations are discharged.  Let
$(\overline{x},\overline{y},\overline{w}) \in \overline{X} \times
\overline{Y} \times \overline{W}$ and let  $x = \iX(\overline{x})$, $y =
\iY(\overline{y})$, and $w = \iW(\overline{w})$. 

If $C$ accepts a triple $(\overline{x},\overline{y},\overline{w})$,
$\Pre(x) \longrightarrow \Post(x,y)$ by a formal proof.  If $C$ rejects a
triple $(\overline{x},\overline{y},\overline{w})$, $\neg \Pre(x) \lor \neg
\Wit(x,y,w)$ by a formal proof.
\end{theorem}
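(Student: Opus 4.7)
The plan is to derive both conclusions by a short propositional composition of the two discharged proof obligations, treating each direction (accept vs.\ reject) as a separate case. Nothing algorithmic is needed: once Checker Correctness and the Witness Property are available as formal lemmas, the theorem is essentially a logical one-liner, and it would be proved the same way in Isabelle/HOL as on paper.

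For the acceptance direction, I assume $C$ accepts $(\overline{x},\overline{y},\overline{w})$ and I want to derive $\Pre(x) \longrightarrow \Post(x,y)$. I discharge the implication by assuming $\Pre(x)$. Under this assumption, part~(1) of Checker Correctness says $C$ halts, so it terminates either by accepting or by rejecting. Part~(2) of Checker Correctness gives the biconditional that the checker accepts iff $\Wit(x,y,w)$; since $C$ accepts by hypothesis, I conclude $\Wit(x,y,w)$. Now I feed $\Pre(x)$ and $\Wit(x,y,w)$ into the Witness Property~(\ref{witness-property}) and obtain $\Post(x,y)$, which closes the implication.

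For the rejection direction, I assume $C$ rejects and argue by contradiction: suppose $\Pre(x) \land \Wit(x,y,w)$. Part~(2) of Checker Correctness then forces $C$ to accept $(\overline{x},\overline{y},\overline{w})$, contradicting rejection. Hence $\neg\Pre(x) \lor \neg\Wit(x,y,w)$, as required. Note that in this direction the halting clause~(1) is not even needed, because rejection already witnesses termination.

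The only subtlety I anticipate, and would make explicit in the formalization, is that ``accept'' and ``reject'' are treated as mutually exclusive and jointly exhaustive outcomes in the presence of $\Pre(x)$; this is exactly what clause~(1) guarantees, and what allows me to pass from ``$C$ accepts'' to ``$C$ does not reject'' before invoking clause~(2). This is not a mathematical obstacle but rather a modelling point that must be baked into the formal statement of the checker's specification so that the composition above is machine-checkable; once that is done, the proof is immediate.
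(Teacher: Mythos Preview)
Your proposal is correct and follows essentially the same approach as the paper: the paper's proof is the same propositional composition of Checker Correctness and the Witness Property, just stated more tersely (it derives $\Pre(x)\longrightarrow\Wit(x,y,w)$ directly from acceptance and checker correctness, then chains with~(\ref{witness-property}); for rejection it appeals once more to checker correctness). Your additional remark about accept/reject being mutually exclusive under $\Pre(x)$ is a fair formalization point but not a divergence in strategy.
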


\begin{proof}
If $C$ accepts $(\overline{x},\overline{y},\overline{w})$, we have $\Pre(x)
\longrightarrow \Wit(x,y,w)$ by the correctness proof of $C$.  Then by
(\ref{witness-property}) we have a formal proof for $\Pre(x)
\longrightarrow \Post(x,y)$. Conversely, if $C$ rejects the triple,  the
correctness proof of $C$ establishes  $\neg \Pre(x) \lor \neg\Wit(x,y,w)$.
\qed
\end{proof}


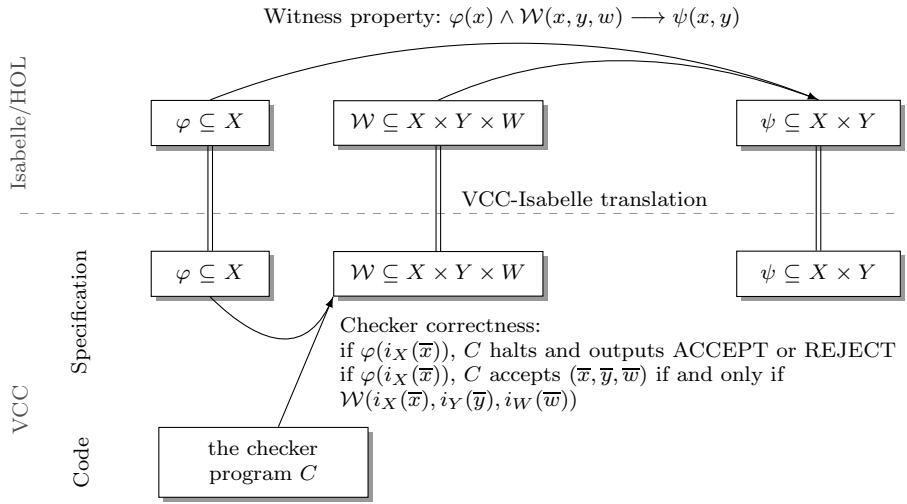
\begin{figure}[t]
\centering
\begin{tikzpicture}
\tikzstyle{box}=[text centered, inner sep=5pt, draw, fill=white,
  drop shadow={opacity=0.8}]
\tikzstyle{box1}=[box, text width=1.2cm]
\tikzstyle{box2}=[box, text width=1.8cm]
\tikzstyle{box3}=[box, text width=2.4cm]

\node[rotate=90, black!70] at (-2,4) {Isabelle/HOL};
\node[rotate=90, black!70] at (-2,0.2) {VCC};
\draw[dashed, black!40] (-2,2.8) -- (9.5,2.8);

\node[rotate=90, text width=2cm, text centered] at (-1.2,-0.5) {Code};
\node[rotate=90, text width=2cm, text centered] at (-1.2,1.5) {Specification};

\node[box3] (code) at (1.2,-0.5)
  {the checker program $C$};
\node[box1] (cpre) at (0.5,2) {$\Pre \subseteq X$};
\node[box3] (cwit) at (3.5,2) {$\Wit \subseteq X \times Y \times W$};
\node[box1] (ipre) at (0.5,4) {$\Pre \subseteq X$};
\node[box3] (iwit) at (3.5,4) {$\Wit \subseteq X \times Y \times W$};
\node[box2] (icon) at (8.5,4) {$\Post \subseteq X \times Y$};
\node[box2] (ccon) at (8.5,2) {$\Post \subseteq X \times Y$};

\draw ($(cpre.north) - (0.03,0)$) -- ($(ipre.south) - (0.03,0)$);
\draw ($(cpre.north) + (0.03,0)$) -- ($(ipre.south) + (0.03,0)$);
\draw ($(cwit.north) - (0.03,0)$) -- ($(iwit.south) - (0.03,0)$);
\draw ($(cwit.north) + (0.03,0)$) -- ($(iwit.south) + (0.03,0)$);
\draw ($(icon.south) - (0.03,0)$) -- ($(ccon.north) - (0.03,0)$);
\draw ($(icon.south) + (0.03,0)$) -- ($(ccon.north) + (0.03,0)$);
\draw (cpre.south) .. controls (1.2,1.0) and (1.8,0.9) .. (cwit.south west);
\draw[-latex] (code) -- (cwit.south west);
\node[text width=7.5cm, anchor=west] at (2.1,0.8) {
  Checker correctness: \\
  if $\Pre(\iX(\overline{x}))$, $C$ halts and outputs 
  ACCEPT or REJECT\\
  if $\Pre(\iX(\overline{x}))$, $C$ accepts
  $(\overline{x},\overline{y},\overline{w})$ if and only if
  $\Wit(\iX(\overline{x}),\iY(\overline{y}),\iW(\overline{w}))$};
\node[anchor=west] at (3.7,3) {VCC-Isabelle translation};
\draw[-latex] (ipre.north) .. controls (3,5.3) and (6,5.3) .. (icon.north);
\draw (iwit.north) .. controls (5,5) and (6.7,5) .. (icon.north);
\node[text width=7cm] at (4.7,5.4) {
  Witness property:  $\Pre(x) \land \Wit(x,y,w) \longrightarrow \Post(x,y)$};
\end{tikzpicture}
\caption{Verification Framework and Proof Obligations.}
\label{fig:framework}
\end{figure}

We next discuss how we fulfill the proof obligations in a
\emph{comprehensive} and \emph{efficient} framework, see
Figure~\ref{fig:framework}. Comprehensive means that the final proof
formally combines (as much as possible at the syntactic level) the
correctness arguments for all levels (implementation, abstraction, and
mathematical theory).  Efficient means using the right tool for the right
task. For example, applying a general theorem prover to verify imperative
code would involve a lot of language-specific overhead and lead to less
automation; similarly, a specialized code verifier is often not powerful
enough to cover nontrivial mathematical properties.  The goals of
comprehensiveness and efficiency seem to be conflicting because different tools
usually come with different languages, axiomatization sets, etc.  Our
solution is to use second-order logic as a common interface language.

LEDA is written in C++~\cite{melhorn-naeher:1999:leda}. Our aim is to
verify code which is as close as possible to the original implementation;
by this, we demonstrate the feasibility of verifying already established
libraries written in imperative languages such as C. Thus, we verify code
with VCC~\cite{cohen-etal:2009:vcc}, an automatic code verifier for full
C. Verifying the C++ implementations remains open.
Our choosingVCC is motivated by the maturity of the tool and the provision of an
assertion language that is rich enough for our requirements.  In the
Verisoft XT project~\cite{verisoftxt}, VCC was successfully used to verify
tens of thousands of lines of C code. The assertion language offers 
ghost code and ghost types such as maps and unbounded integers.  This gives
enough expressiveness to quantify over graphs, labelings, etc., and
simplifies the translation to other proof systems.  For verifying the
mathematical part, we use Isabelle/HOL, a higher-order-logic
interactive theorem prover~\cite{nipkow-etal:2002:isabelle}, because of the
large amount of already formalized mathematics, its descriptive proof format, 
and its various automatic proof methods and tools.  In
Section~\ref{sec:tools}, we review both systems.
Figure~\ref{fig:framework} shows the workflow for verifying checkers.

\begin{description}
\item[Checker Verification:]
  The starting point is the checker code written in C.  Using VCC, we annotate
  the functions and data structures such that the witness predicate
  $\Wit$ can be established as the postcondition of the checker function.  We
  define the witness predicate and the pre- and postcondition as well as
  the mappings $\iX$, $\iY$, and $\iW$ as pure mathematical objects using
  VCC ghost types and ghost functions.
		
\item[Export to Isabelle/HOL:]
  Establishing the witness property involves, in general, mathematical reasoning 
  beyond what is conveniently done in VCC. We therefore translate the precondition, witness
  predicate, postcondition, and the abstract representations of the input,
  output, and witness from VCC to Isabelle/HOL.
   Since we formulated them as
  pure mathematical objects in VCC, this translation is purely syntactical
  and does not involve any VCC specifics. The translation could easily be
  automated. 

\item[Witness Property:]
  We prove the witness property using Isabelle/HOL. It is convenient to
  formulate this theorem on yet a higher level of abstraction and provide
  linking proofs to connect the exported VCC predicates with their
  abstracted counterparts.
\end{description}

We stress that the overall correctness theorem, \ie the witness property,
can be formulated in VCC; this is important for usability. The user of a
verified checker only has to look at its VCC specification; the fact that
we outsource the proof of the witness property to Isabelle/HOL is of no
concern to the user.  We formulate the witness property as an axiom in VCC. This
is sound since we restrict the language for describing the witness property
to second-order logic, which guarantees that we can express it equivalently
in Isabelle's higher-order logic (\cf Section~\ref{sec:tools}). More
precisely, since the VCC formulation of the witness property is valid if
and only if its translation to Isabelle is valid, and since Isabelle is
consistent, and hence, only valid statements can be proven, it is sound to
add the witness property as an axiom to VCC.

The reader may wonder why we do not formally prove the existence of a
witness:
\[
  \forall x \, y.\  \Pre(x) \land \Post(x,y) \longrightarrow
  \exists w.\ \Wit(x,y,w).
\]
The existence of a witness is part of the correctness argument of the
solution algorithm (e.g., the shortest-path algorithm, the maximum-matching
algorithm). As previously mentioned, we do not verify the solution algorithms.
 Rather, the execution of the solution
algorithm establishes the existence of a witness whenever it is called for
a specific input $\overline{x}$. It returns $\overline{y}$ and
$\overline{w}$, which we then hand to the checker $C$. In this way, we
obtain formal instance correctness without having to verify the solution
algorithm. Of course, this leaves the possibility that the solution
algorithm is incorrect and does not always provide a  $\overline{y}$ and
$\overline{w}$ such that the checker accepts $(\overline{x}, \overline{y},
\overline{w})$.

For a user, the checker is what counts. The user can trust it because it
has been formally verified. Moreover, if it accepts a triple
$(\overline{x}, \overline{y}, \overline{w})$, the user can be sure that
$y$ is a correct output, provided that $x$ satisfies the precondition of the
algorithm.  This is because the witness property has been formally
verified. If the checker rejects a triple, the user knows that either $x$
does not satisfy the precondition or $(x,y,w)$ does not satisfy the witness
predicate. The method by which $\overline{y}$ and $\overline{w}$ were
produced is of no concern to the user.

The witness property is formulated with respect to a certain IO-behavior
$(\Pre,\Post)$ and not with respect to a particular algorithm that realizes the
IO-behavior. Therefore, a checker can be used in connection with any
certifying algorithm for IO-behavior $(\Pre,\Post)$ that produces the
appropriate witnesses.

\section{Tool Overview: Isabelle/HOL and VCC}
  \label{sec:tools}

\paragraph{Isabelle/HOL}~\cite{nipkow-etal:2002:isabelle} is an interactive theorem
prover for classical higher-order logic based on Church's simply typed
lambda calculus. The system is built on top of an inference
kernel, which provides only a small number of rules to construct theorems;
complex deductions (especially by automatic proof methods) ultimately rely
on these rules only.  This approach, called LCF due to its pioneering
system~\cite{gordon-etal:1979:lcf}, guarantees the correctness of theorems proven 
in the system as long as the inference kernel is correct.  Isabelle/HOL comes with 
a rich set of already formalized theories, among which are natural numbers and 
integers as well as sets, finite sets, and---as a recent
addition~\cite{noschinski:2011:graph-library}---graphs.  New types can, for example, 
be introduced by defining them as records (isomorphic to tuples with named
update and selector functions). New constants can be
introduced, for example, via definitions relative to already existing
constants.

Proofs in Isabelle/HOL can be written in a style close to that of
mathematical textbooks.  The user structures the proof, 
and the system fills in the gaps with its automatic proof methods. Moreover,
the user can use locales, which allow defining local scopes in
which constants are defined and assumptions about them are made. Theorems
can be proven in the context of a locale and can use the constants and
depend on the assumptions of this locale. A locale can be instantiated to
concrete entities if the user is able to show that the latter fulfill the
locale assumptions.

\paragraph{VCC}~\cite{cohen-etal:2009:vcc} is an assertional, automatic, deductive code
verifier for full C.  Specifications in the form of function
contracts, data invariants, and loop invariants as well as further
annotations to maintain inductively defined information or to guide
VCC otherwise, are added directly into the C source code as comments. 
During builds with a C compiler, these annotations are ignored.  
From the annotated program, VCC generates verification conditions for partial 
or total correctness, which it then tries to discharge using the automatic theorem 
prover Z3~\cite{moura-bjorner:2008:z3} through the Boogie 
verifier~\cite{barnett-etal:2006:boogie}.

Verification in VCC makes heavy use of ghost data and code, enclosed by
\inlinevcc{_(} and \inlinevcc{)}, for reasoning about the program but
omitted from the concrete implementation. VCC provides ghost objects, ghost
fields of structured data types, local ghost variables, ghost function parameters, 
and ghost code. Ghost data and ghost code can use both C data types and 
additional mathematical data types, \eg mathematical integers
(\inlinevcc{\\integer}) and natural numbers (\inlinevcc{\\natural}),
records (similar to C structures), and maps (with a syntax similar to C
arrays). VCC ensures that information does not flow from a ghost state to
a non-ghost state and that all ghost code terminates; these checks guarantee
that program execution, when projected to the non-ghost code, is not affected by
the ghost code.

\paragraph{Export from VCC to Isabelle/HOL.} For the types and propositions that we pass from 
VCC to Isabelle/HOL, we restrict ourselves to a subset of VCC's specification language.  
Simple types are natural numbers,
integers, algebraic datatypes over simple types, and ghost records whose
fields are simple types. Rich types are simple types, ghost records whose
fields are rich types, and maps from simple types to rich types.
Propositions can be formed by usual logical connectives, quantifiers over
variables of rich types, arithmetic expressions, equalities, and
user-defined pure, stateless functions whose argument and result types are
rich and whose definitions or contracts are again propositions, possibly
using pattern matching over algebraic datatypes. Any type or function of this subset can
be expressed equivalently in Isabelle/HOL, essentially by syntactic
rewriting. More precisely, VCC algebraic datatypes can be translated into
Isabelle datatypes, VCC ghost records can be translated into
Isabelle records, and pure VCC ghost functions can be translated into
Isabelle function definitions.  The former two translations are sound and
complete because the semantics of datatypes and records is the same in both
systems; the latter is sound and complete because VCC's underlying logic is
subsumed by the higher-order logic of Isabelle/HOL.  The translation maps
VCC specification types (\inlinevcc{\\bool}, \inlinevcc{\\natural},
\inlinevcc{\\integer}, and map types) to equivalent Isabelle types
($\mathit{bool}$, $\mathit{nat}$, $\mathit{int}$, and function types) and
maps VCC expressions comprising logical connectives, quantifiers,
arithmetic operations, equality, and specification functions to corresponding 
Isabelle terms.

\section{Case Studies}
  \label{sec:case-studies}

We present three case studies from the domain of graphs. We start with 
the certifying algorithms and the corresponding checkers in LEDA.  We give 
formal proofs for the correctness of the checkers, for the related witness properties, 
and for the connection between them.  Except for the witness properties,
which are proven
in Isabelle/HOL, all presented abstractions and functions have been
formally verified using VCC.

All files related to our formalization
can be obtained from the following URL: \\
\url{http://www21.in.tum.de/~boehmes/certifying\_computations.html}

\subsection{Connected Components in Graphs}
  \label{sec:case-study-connected}
Our first case study considers the connected components problem. 
Given an undirected graph $G = (V, E)$, we consider an algorithm that
decides whether $G$ is connected, i.e., whether there is a path between any pair of
vertices~\cite[Section 7.4]{melhorn-naeher:1999:leda}.  In the negative
case, \ie when the graph is not connected, there is a simple witness.
It consists of a cut $S$, \ie a nonempty subset $S$ of the vertices with $S
\neq V$ such that every edge of the graph has either both or no endpoint
in $S$.  In other words, no edge crosses the cut. In the positive case,
\ie when the given graph is connected, the algorithm can produce 
a spanning tree of $G$ as a witness. A spanning tree of $G$ is a subgraph of
$G$, which is a tree and contains all vertices of $G$. We concentrate here
on the more complicated positive case.  We describe a checker for the
spanning tree witness and the verification of this checker.  On a high
level, we instantiate our general approach as follows:

\begin{center}
\begin{tabular}{r@{\hspace{1em}=\hspace{1em}}p{9cm}}
input $x$ & an undirected graph $G = (V, E)$\\
output $y$ & either $\mathit{True}$ or $\mathit{False}$, indicating whether $G$ is
connected \\
witness $w$ & a cut or a spanning tree \\
$\Pre(x)$ & $G$ is wellformed, \ie $E \subseteq V \times V$, $V$ and $E$ 
are finite sets\\
$\Wit(x, y, w)$ & $y$ is $\mathit{True}$ and $w$ is a spanning tree of $G$, 
or $y$ is $\mathit{False}$ and $w$ is a cut \\
$\Post(x,y)$ & if $y$ is $\mathit{True}$, $G$ is connected, and if $y$ is
$\mathit{False}$, $G$ is not connected.
\end{tabular}
\end{center}

\newcommand{\parentedge}{\mathit{parent\dash{}edge}} 
\newcommand{\num}{\mathit{num}}

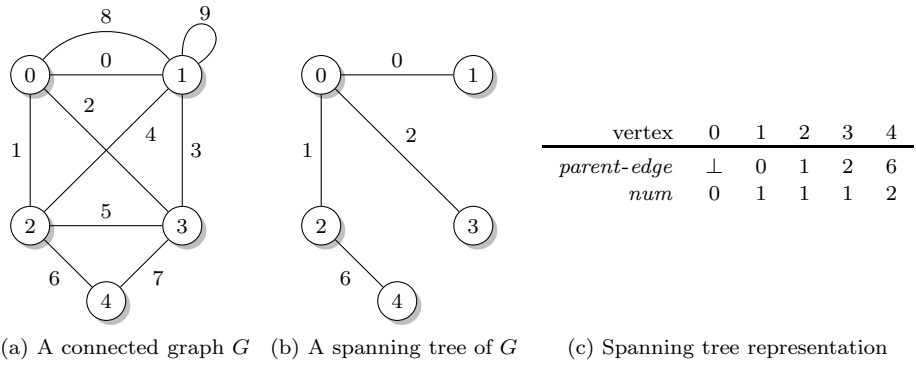
\begin{figure}
\begin{subfigure}[b]{0.28\textwidth}
  \centering
	\begin{tikzpicture}
    \tikzstyle{circ}=[circle, draw=black, fill=white, drop shadow]
    \node[circ] (n0) at (0,2) {0};
    \node[circ] (n1) at (2,2) {1};
    \node[circ] (n2) at (0,0) {2};
    \node[circ] (n3) at (2,0) {3};
    \node[circ] (n4) at (1,-1) {4};
    \draw (n0) to node[above] {0} (n1);
    \draw (n0) to node[left] {1} (n2);
    \draw (n0) to node[above right, near start] {2} (n3);
    \draw (n1) to node[right] {3} (n3);
    \draw (n1) to node[below right, near start] {4} (n2);
    \draw (n2) to node[above] {5} (n3);
    \draw (n2) to node[below left] {6} (n4);
    \draw (n3) to node[below right] {7} (n4);
    \draw (n0) to[bend left=50] node[above] {8} (n1);
    \draw (n1) to[in=90, out=35, loop] node[above right, near end] {9} (n1);
  \end{tikzpicture}
  \caption{A connected graph $G$}
  \label{fig:connected-components-a}
\end{subfigure}
\hfill
\begin{subfigure}[b]{0.28\textwidth}
  \centering
  \begin{tikzpicture}
    \tikzstyle{circ}=[circle, draw=black, fill=white, drop shadow]
    \node[circ] (n0) at (0,2) {0};
    \node[circ] (n1) at (2,2) {1};
    \node[circ] (n2) at (0,0) {2};
    \node[circ] (n3) at (2,0) {3};
    \node[circ] (n4) at (1,-1) {4};
    \draw (n0) to node[above] {0} (n1);
    \draw (n0) to node[left] {1} (n2);
    \draw (n0) to node[above right] {2} (n3);
    \draw (n2) to node[below left] {6} (n4);
  \end{tikzpicture}
  \caption{A spanning tree of $G$}
  \label{fig:connected-components-b}
\end{subfigure}
\hfill
\begin{subfigure}[b]{0.42\textwidth}
  \centering
  \begin{tabular}{rccccc}
  vertex & 0 & 1 & 2 & 3 & 4 \\
  \hline\\[-0.7em]
  $\parentedge$ & $\bot$ & 0 & 1 & 2 & 6 \\[0.3ex]
  $\num$ & 0 & 1 & 1 & 1 & 2
  \end{tabular}\\[1.5cm]
  \caption{Spanning tree representation}
  \label{fig:connected-components-c}
\end{subfigure}

\caption{An example of a connected graph $G$ and a spanning tree of $G$
  witnessing its connectivity.  The vertices belong to the set $\{0,\dots,
  n-1\}$ and the edges are pairs of vertices indexed by an identifier
  ranging from $0$ to $m-1$, where $n$ and $m$ are the number of vertices
  and edges in $G$. The spanning tree in (b) can be represented by a root
  vertex $r=0$ and functions $\parentedge$ and $\num$ as shown in the table
  in (c). Graphs may have self-loops and parallel edges.}
  \label{fig:connected-components}
\end{figure}

\noindent
We restrict ourselves to the positive case $y = \mathit{True}$.
Figure~\ref{fig:connected-components} shows a graph $G$ and a spanning tree
of it. We represent spanning trees by functions $\parentedge$ and $\num$ 
and by a root vertex $r$, and we view the edges of the tree oriented towards $r$:
for $v \not= r$, $\parentedge(v)$ is the first edge on the path from $v$ to
$r$, $\parentedge(r) = \bot$, and $\num(v)$ is the length of the path from
$v$ to $r$ for all $v$.  The function $\num$ is needed in order to show that 
$\parentedge$ encodes a forest. We present the implementation of a checker in
Section~\ref{sec:connected-checker}, detail the formalization of the
witness predicate and the verification of the checker in
Section~\ref{sec:connected-vcc}, and prove the witness property in
Section~\ref{sec:connected-isabelle}. The relevant files in the companion
website are check\_connected.c (C-code and checker correctness),
check\_connected.thy (Isabelle/HOL representation of the VCC checker
specification and witness property), ConnectedComponents.thy (abstract
verification of the witness property in a locale), and
ConnectedComponents\_Link.thy (instantiation of the locale with the
Isabelle/HOL representation of the VCC specification).

\subsubsection{Connected-Components Checker}
  \label{sec:connected-checker}

We begin by fixing a representation of graphs in the programming
language C, see Listing~\ref{lst:c-graph}. Vertices are numbered 
consecutively from $0$ to $\text{\inlinevcc{n}} - 1$.  Edges are pairs where the first
vertex is labeled \inlinevcc{s} (for source), and the second vertex is
labeled \inlinevcc{t} (for target).  Edges are stored in an array
\inlinevcc{es}, which is indexed by edge identifiers ranging from 0 to
$\text{\inlinevcc{m}} - 1$.  We require that the two vertices
of each edge belong to the graph, \ie that they are from the range $\{0,
\ldots, \text{\inlinevcc{n}} - 1\}$, and call graphs with this property \emph{wellformed}. 
We use the same data structure for directed and undirected graphs. For
directed graphs, an edge \inlinevcc{e} with \inlinevcc{e.s} = \inlinevcc{u}
and \inlinevcc{e.t} = \inlinevcc{v} is directed from \inlinevcc{u} to
\inlinevcc{v}. For undirected graphs, it represents the unordered pair
\{\inlinevcc{u},\inlinevcc{v}\}.

\begin{listing}
\begin{vcclst}
typedef unsigned Nat;
typedef Nat Vertex;
typedef Nat Edge_Id;
typedef struct { Nat s; Nat t; } Edge;
typedef struct { Nat m; Nat n; Edge* es; } Graph;
\end{vcclst}
\caption{A representation of graphs in C. The field \inlinevcc{m} gives the
  number of edges (and hence the length of the array \inlinevcc{es}), and
  \inlinevcc{n} gives the maximum number of vertices in the graph.}
  \label{lst:c-graph}
\end{listing}

We represent spanning trees as explained before. Instead of
functions, we
use two arrays, \inlinevcc{parent_edge} and \inlinevcc{num}, in addition to a
root vertex \inlinevcc{r}. The \inlinevcc{parent_edge} array maps
\inlinevcc{r} to a negative value, \ie to a value that does not identify
any edge.

The connected-graph checker is a function that succeeds if the two
functions \inlinevcc{check_r} and \inlinevcc{check_parent_num}
(Listing~\ref{lst:connected-checker}) succeed.  The first function checks
that \inlinevcc{r} is indeed the root of the spanning tree.  The second
function checks for every vertex \inlinevcc{v} different from \inlinevcc{r}
that the edge \inlinevcc{parent_edge[v]} is incident to \inlinevcc{v} and
that the other endpoint of the edge has a number one smaller than
\inlinevcc{num[v]}.

\begin{listing}
\begin{vcclst}
int check_r(Graph* G, Vertex r, int* parent_edge, Nat* num)
{
    return r < G->n && num[r] == 0 && parent_edge[r] == -1;
}

int check_parent_num(Graph* G, Vertex r, int* parent_edge, Nat* num)
{
    Vertex v, a, b; Edge_Id e; 
 
    for (v = 0; v < G->n; v++)
    {
        if (v == r) continue;
  
        if (parent_edge[v] < 0 || ((Edge_Id)parent_edge[v]) >= G->m) return FALSE;
  
        e = (Edge_Id)parent_edge[v];
        a = G->es[e].s;
        b = G->es[e].t;

        if (v == a && num[v] == num[b] + 1) continue;
        if (v == b && num[v] == num[a] + 1) continue;
        return FALSE;
    }
    return TRUE;
}
\end{vcclst}
\caption{The connected-components checker.}
  \label{lst:connected-checker}
\end{listing}

\subsubsection{Checker Correctness}
  \label{sec:connected-vcc}

\begin{listing}
\begin{minipage}[t]{.42\textwidth}
\begin{vcclst}
_(typedef \natural \Vertex)
_(typedef \natural \Edge_Id)
_(record \Edge {
    \Vertex src;
    \Vertex trg;
})
_(record \Graph {
    \natural num_verts;
    \natural num_edges;
    \Edge edge[\Edge_Id];
})
\end{vcclst}
\end{minipage}
\hfill
\begin{minipage}[t]{.55\textwidth}
\begin{vcclst}
_(def \bool \wellformed(\Graph G)
{
    return
        forall \Edge_Id i; i < G.num_edges ==>
            G.edge[i].src < G.num_verts &&&
            G.edge[i].trg < G.num_verts;
})
\end{vcclst}
\end{minipage}
\caption{Abstract graphs and a predicate to describe wellformed graphs.}
  \label{lst:spec-graph}
\end{listing}

To prove the two checker functions correct, we  need to provide abstract representations for graphs and paths. We decided to keep them close to the concrete representation for two
reasons. First, it makes detecting differences, and hence potential bugs, easier
for the programmer. Second, it also makes reasoning for VCC simpler.  The
declaration of abstract graphs is given in Listing~\ref{lst:spec-graph}
together with the ghost predicate \inlinevcc{\\wellformed} for describing
when an abstract graph is wellformed.  This ghost predicate plays the role
of the precondition $\Pre$ in this case study.
Our abstract version of the \inlinevcc{num} array is a mapping from
vertices to natural numbers. The abstract version of the
\inlinevcc{parent_edge} array is a mapping from vertices to the set $\Nat
\cup \{\bot\}$; we use $\bot$ to model an undefined value. To represent
this set, we define an algebraic datatype \inlinevcc{Option}:

\vspace{5pt} 

\begin{vccquote}
_(datatype \Option
{
    case \none();
    case \some(\Edge_Id e);
})
\end{vccquote}

\noindent
with operations \inlinevcc{\\is_some(o)} for the test $\text{\inlinevcc{o}}
\neq \bot$ and \inlinevcc{\\the(o)} for extracting an edge identifier.
The abstraction functions that map concrete data to pure mathematical data
are straightforward to define. For example,

\begin{vccquote}
_(def \Graph \abs_graph(Graph* G)
{
    return (\Graph) {
        .num_verts = G->n,
        .num_edges = G->m,
        .edge =
            \lambda \Edge_Id i;
                (i < G->m) ?
                    (\Edge) { .src = G->es[i].s, .trg = G->es[i].t } :
                    (\Edge) { .src = 0, .trg = 0 }};
})
\end{vccquote}

\noindent
abstracts a concrete graph $G$ into an abstract graph of type
\inlinevcc{\\Graph}.
Using the abstract types, we define the witness predicate as a conjunction
of two properties, one for each of the checker functions in
Listing~\ref{lst:connected-checker}.

\begin{description}
\item[\textsf{check\_r}:] Vertex \inlinevcc{r} is the root of the spanning
  tree:

\begin{vccquote}
r < G.num_verts &&& !!\is_some(parent_edge[r]) &&& num[r] = 0
\end{vccquote}

\item[\textsf{check\_parent\_num}:] Every vertex of the graph is connected
  to some other vertex closer to \inlinevcc{r}:

\begin{vccquote}
forall \Vertex v; v < G.num_verts &&& v !!= r ==>
    \is_some(parent_edge[v]) &&& \the(parent_edge[v]) < G.num_edges &&&
    (G.edge[\the(parent_edge[v])].trg == v &&&
       num[v] == num[G.edge[\the(parent_edge[v])].src] + 1 |||
     G.edge[\the(parent_edge[v])].src == v &&&
       num[v] == num[G.edge[\the(parent_edge[v])].trg] + 1)
\end{vccquote}

\end{description}

Thanks to the low level of abstraction in the above predicates, the two
checker functions are easily verified. For the verification of
\inlinevcc{check_parent_num}, we need to annotate the loop with the second
conjunct above in which \inlinevcc{G.num_verts} is replaced by the loop
variable as a loop invariant. Moreover, for every \inlinevcc{return
FALSE}, we need to assert, or restate, on the abstract level the properties
that are violated to guide VCC.  Otherwise, it would fail to show
completeness of the checker. For instance,

\begin{vccquote}
if (parent_edge[v] < 0 || ((Edge_Id)parent_edge[v]) >= G->m)
{
    _(assert !!\is_some(P[v]) ||| \the(P[v]) >== \abs_graph(G).num_edges)
    return FALSE;
}
\end{vccquote}

\noindent
is one of the two occurrences of such extra assertions in
\inlinevcc{check_parent_num}.

We express the postcondition of the checker, \ie that any pair of vertices
of the graph \inlinevcc{G} is connected by a path as follows:

\begin{vccquote}
forall \Vertex u, v; u < G.num_verts &&& v < G.num_verts ==>
    exists \Path p; \natural n; \is_path(G, p, n, u, v)
\end{vccquote}

\noindent
Here, the type \inlinevcc{\\Path} is a sequence of vertices, represented as
a mapping from natural numbers to vertices, and the predicate
\inlinevcc{\\is_path(G, p, n, u, v)} holds if the path \inlinevcc{p}
of length \inlinevcc{n} starts at \inlinevcc{u}, ends at \inlinevcc{v}, and only
contains pairwise distinct vertices  that are connected by edges of the
graph:

\begin{vccquote}
p[0] == u &&&
p[n] == v &&&
(forall \natural i; i <== n ==> p[i] < G.num_verts) &&&
(forall \natural i; i < n ==> \is_edge(G, p[i], p[i+1])) &&&
(forall \natural i, j; i <== n &&& j <== n &&& i !!= j ==> p[i] !!= p[j])
\end{vccquote}

\noindent
The predicate \inlinevcc{\\is_edge(G, u, v)}, for any two vertices
\inlinevcc{u} and \inlinevcc{v} of \inlinevcc{G}, is true if and only if
\inlinevcc{u} and \inlinevcc{v} are the endpoints of an edge of
\inlinevcc{G}:

\begin{vccquote}
exists \Edge_Id i; i < G.num_edges &&&
    (G.edge[i].src = u &&& G.edge[i].trg = v |||
     G.edge[i].src = v &&& G.edge[i].trg = u)
\end{vccquote}
\noindent
We give a formal proof for the implication, from precondition and witness
predicate to the postcondition, in the next section.

\subsubsection{Proof of the Witness Property for the Connected-Components
  Checker}
  \label{sec:connected-isabelle}
  
We prove in Isabelle that a spanning tree witnesses the connectivity of a
graph.  We do so in two steps.  First, we perform a high-level proof in which
we abstract from concrete representations of graphs and spanning trees.  We
then instantiate this proof with the data structures and corresponding
properties exported from VCC.

Our formalization builds on the Isabelle graph library developed by Lars
No\-schinski~\cite{noschinski:2011:graph-library}. Graphs in this library
are directed.  A \emph{pseudo-digraph} is a wellformed directed graph with
a finite set of vertices and a finite set of edges;  the library reserves
the word \emph{digraph} for graphs without parallel edges and self-loops.
Unlike in VCC, in Isabelle we represent undirected graphs as bidirected graphs\footnote{We do so in order to directly use the Isabelle graph library.}, 
i.e., directed graphs
containing for every edge $(u,v)$ also the reversed edge $(v,u)$.  The
function $\mathit{mk\dash symmetric}$ maps a pseudo-digraph to a bidirected
pseudo-digraph by appropriately extending the set of edges with missing
reversed edges.  A vertex $v$ is \emph{reachable} from a vertex $u$ in
a (bi)directed graph $G$ if there exists a
directed \emph{walk} from $u$ to $v$ in $G$, {\ie a sequence
$(u_1,v_1),(u_2,v_2),\ldots, (u_k,v_k)$ of edges with $u_1 = u$, $v_k = v$,
and $v_i = u_{i+1}$ for $1 \le i < k$. An alternative and
equivalent formalization of reachability between vertices $u$ and $v$ in
$G$ is via sequences of vertices $v_1, v_2, \ldots, v_k$, where $v_1 = u$,
$v_k = v$ and $(v_i, v_{i+1})$ is an edge of $G$ for $1 \le i < k$.  We say a 
vertex $v$ is \emph{reachable through a path} from a vertex $u$ in $G$ if $v$ 
is reachable from $u$ through a path in $G$. An undirected graph is \emph{connected} 
if for any two vertices of the graph, one is reachable through a path from the other.

\begin{listing}
\begin{isalst}
locale connected_components_locale = pseudo_digraph +
    fixes num :: "'a \<Rightarrow> nat"
    fixes parent_edge :: "'a \<Rightarrow> 'b option"
    fixes \r :: 'a
    assumes r_assms: "\r \<in> verts \G \<and> parent_edge \r = None \<and> num \r = \0"
    assumes parent_num_assms: 
        "\<And>\v. \v \<in> verts \G \<and> \v \<noteq> \r \<Longrightarrow>
            \<exists>\e \<in> edges \G. 
                parent_edge \v = Some \e \<and> 
                target \G \e = \v \<and> 
                num \v = num (start \G \e) + \1"
\end{isalst}
\caption{Locale for the connected-components proof in Isabelle. The symbol 
$\bigwedge$ stands for universal quantification.}
\label{lst:connected-components-locale}
\end{listing}

Our high-level proof rests on the Isabelle locale
\inlineisa{connected_components_locale}
(Listing~\ref{lst:connected-components-locale}) that describes the
assumptions of our theorem.  We fix $G$ to be a pseudo-digraph where
\inlineisa{'a} is an abstraction of the type of vertices and \inlineisa{'b}
is an abstraction of the type of edges.  Furthermore, we fix a 
representation of spanning trees with functions \inlineisa{parent_edge} and
\inlineisa{num} and vertex $r$ as the root. Based on these assumptions, we
prove that $G$ is connected.  We show first that every vertex $v$ in the
graph is reachable from the root $r$ by induction on $\mathit{num}\ v$, \ie
the length of the walk from $r$ to $v$ in the spanning tree.  The base case
follows directly from our assumptions.  For the inductive step, we can
assume a walk from $r$ to the parent of a vertex $v$.  Using the
assumptions, this walk can be extended to a walk from $r$ to $v$ since
there is an edge between $v$ and its parent.  Now, since $G$ is bidirected,
we can establish that there is a walk between any two vertices of $G$ by
combining the walks that connect them with the root $r$. If there is a walk 
between two vertices, there is also a path between them. Therefore, all vertices 
in $G$ are reachable through a path from one another, and hence, $G$ is connected.

The final part of the formal proof---linking the
high-level proofs with the properties exported from VCC to Isabelle---is
fairly straightforward. Proving that the precondition
and the witness predicate (\cf Section~\ref{sec:connected-vcc}) match the
assumptions specified in the locale \inlineisa{connected_components_locale}
involves no reasoning beyond syntactical rewriting.  To instantiate these assumptions, 
we provide lifting functions that abstract from the concrete representations 
of graphs and spanning trees stemming from our VCC specification to the 
high-level representation used by the Isabelle graph library.  Thus, if the 
checker accepts, the lifted high-level graph is connected. Establishing the 
checker postcondition (the connectivity of unlifted graphs) requires showing 
that any high-level path witnessing reachability between two vertices 
corresponds to an unlifted path. This is straightforward  because our 
representation of paths in the VCC formalization (\cf Section~\ref{sec:connected-vcc}) 
is close to the path representation of the Isabelle graph library.

\subsection{Shortest Paths in Graphs}
  \label{sec:case-study-shortest}

\newcommand{\dist}{\mathit{dist}}
\newcommand{\cost}{\mathit{cost}}

Our second case study is about the single-source shortest-path problem in 
directed graphs with nonnegative edge weights. It can be solved, for instance, 
using Dijkstra's algorithm~\cite[Sections 6.6 and 7.5]{melhorn-naeher:1999:leda}. 
Instead of verifying this algorithm, we request that it returns, along with the
computed shortest distance for each vertex of a graph, the corresponding
shortest path as witness.  That is, we instantiate our general approach as
follows:

\begin{center}
\begin{tabular}{r@{\hspace{1em}=\hspace{1em}}p{9cm}}
input $x$ & a directed graph $G = (V, E)$, a function
$\cost : E \rightarrow \Nat$ for edge weights, a vertex $s$ \\
output $y$ & a mapping $\dist : V \rightarrow (\Nat \cup \infty)$\\
witness $w$ & a tree rooted at $s$\\
$\Pre(x)$ & $G$ is wellformed, V and E are finite sets, and $s \in V$\\
$\Wit(x, y, w)$ & $w$ is a shortest-path tree, i.e., for each $v$ reachable
from $s$, the tree path from $s$ to $v$ has length $\dist(v)$\\
$\Post(x,y)$ & for
each $v \in V$, $\dist(v)$ is the cost of a shortest path from $s$ to
$v$ (or $\infty$ if there is no path from $s$ to $v$).
\end{tabular}
\end{center}

\begin{figure}
\begin{subfigure}[b]{0.25\textwidth}
  \centering
	\begin{tikzpicture}
    \tikzstyle{circ}=[circle, draw=black, fill=white, drop shadow]
    \node[circ] (n0) at (0,2) {$s$};
    \node[circ] (n1) at (1.5,2) {$t$};
    \node[circ] (n2) at (0,0) {$u$};
    \node[circ] (n3) at (1.5,0) {$v$};
    \node[circ] (n4) at (0.75,-1) {$w$};
    \draw[-latex] (n0) to node[above] {0/\bfseries 1} (n1);
    \draw[-latex] (n0) to node[left] {1/\bfseries 1} (n2);
    \draw[-latex] (n1) to node[above left, pos=0.4] {2/\bfseries 1} (n2);
    \draw[-latex] (n1) to node[left, near end] {3/\bfseries 0} (n3);
    \draw[-latex] (n3) to[bend right=60] node[left] {4/\bfseries 0} (n1);
    \draw[-latex] (n4) to[in=115, out=55, loop] node[above] {5/\bfseries 2}
      (n4);
  \end{tikzpicture}
  \caption{A directed graph $G$}
  \label{fig:shortest-paths-a}
\end{subfigure}
\hfill
\begin{subfigure}[b]{0.31\textwidth}
  \centering
	\begin{tikzpicture}
    \tikzstyle{circ}=[circle, draw=black, fill=white, drop shadow]
    \node[circ] (n0) at (0,2) {$s$};
    \node[circ] (n1) at (1.5,2) {$t$};
    \node[circ] (n2) at (0,0) {$u$};
    \node[circ] (n3) at (1.5,0) {$v$};
    \node[white] (n4) at (0.75,-1) {$w$};
    \draw[-latex] (n0) to node[above] {0} (n1);
    \draw[-latex] (n0) to node[left] {1} (n2);
    \draw[-latex] (n1) to node[left] {3} (n3);
  \end{tikzpicture}
  \caption{A shortest-path tree of $G$}
  \label{fig:shortest-paths-b}
\end{subfigure}
\hfill
\begin{subfigure}[b]{0.42\textwidth}
  \centering
  \begin{tabular}{rccccc}
  vertex & $s$ & $t$ & $u$ & $v$ & $w$ \\
  \hline\\[-0.7em]
  $\parentedge$ & $\bot$ & 0 & 1 & 3 & $\bot$ \\[0.3ex]
  $\num$ & 0 & 1 & 1 & 2 & $\infty$ \\[0.3ex]
  $\dist$ & 0 & 1 & 1 & 1 & $\infty$
  \end{tabular}\\[1.2cm]
  \caption{Tree representation}
  \label{fig:shortest-paths-c}
\end{subfigure}
\caption{A directed graph $G=(V,E)$ with the edges labeled $i/\mathbf{k}$,
  where $i$ is a unique edge index and where $\mathbf{k}$ is the cost of
  that edge, and a shortest-path tree of $G$ rooted at start vertex $s\in
  V$.  The tree is encoded by $\parentedge$, $\num$ and $\dist$ according
  to the table in (c). Observe that vertex $w$ is not reachable from $s$ and
  that the cycle $t \rightarrow v \rightarrow t$ has cost zero.}
  \label{fig:shortest-paths}
\end{figure}
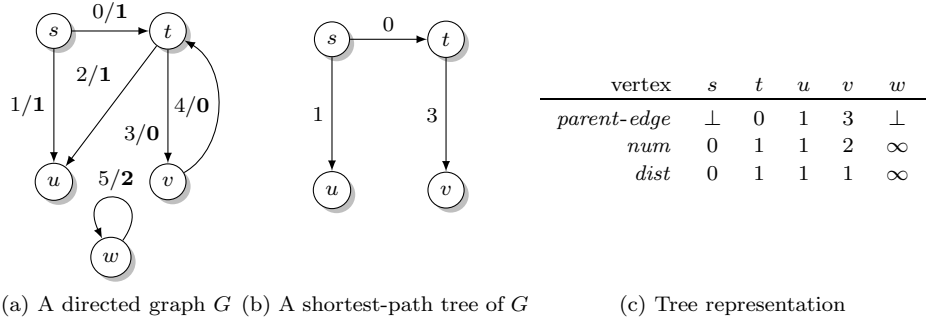

\noindent
Figure~{\ref{fig:shortest-paths} shows a directed graph and a shortest-path
tree rooted at $s$. We encode a shortest-path tree by functions
$\parentedge$, $\dist$, and $\num$. For each $v$ reachable from $s$,
$\dist(v)$ is the shortest-path distance from $s$ to $v$ and $\num(v)$ is
the depth of $v$ in the shortest-path tree. For vertices $v$ that are not
reachable from $s$, $\dist(v) = \num(v) = \infty$. For reachable vertices
$v$ different from $s$, the edge $\parentedge(v)$ is the last edge on a
shortest path from $s$ to $v$. This witness is somewhat verbose. As we will
see in Section~\ref{sec:shortest-isabelle}, we could do without the
$\parentedge$ function. If all edge costs are positive, no witness is
required beyond the $\dist$ function. If one also allows cost
zero for edges as we do, the depth function $\num$ is indispensable~\cite[Section 2.4]{mcconnell-etal:2011:certifying}.

We present the implementation of a checker in 
Section~\ref{sec:shortest-checker}, detail the formalization
of the witness predicate and the verification of the checker in 
Section~\ref{sec:shortest-vcc}, and prove the witness property in 
Section~\ref{sec:shortest-isabelle}. The relevant files in the companion
website are check\_shortest\_path.c (C-code and checker correctness),
check\_shortest\_path.thy (Isabelle/HOL representation of the VCC checker
specification, and witness property), ShortestPath.thy (abstract
verification of the witness property in a locale), and
ShortestPath\_Link.thy (instantiation of the locale with the Isabelle/HOL
representation of the VCC specification).

\subsubsection{Shortest-Paths Checker}
  \label{sec:shortest-checker}

We adopt the data structures of the previous case study
(Section~\ref{sec:connected-checker}) with the exception that the
\inlinevcc{num} array stores elements of type \inlinevcc{int} instead of
\inlinevcc{Nat}.  This is because vertices may now also be unreachable 
from the source vertex, and we encode this by
requiring that \inlinevcc{num} takes a negative value for such vertices.
We represent distances from the source vertex to any other vertex by an
array \inlinevcc{dist} with elements of type \inlinevcc{int}.  Any negative
value encodes $\infty$.  Finally, the edge weights are modeled by an array
\inlinevcc{cost} that gives for every edge a value of type
\inlinevcc{ushort} (an abbreviation for \inlinevcc{unsigned short}).

Based on these types, we implement the shortest-path checker as a function
that succeeds when all of the four functions given in
Listing~\ref{lst:shortest-checker} succeed.  That is, we check that the
source vertex \inlinevcc{s} is indeed the starting point (in
\inlinevcc{check_start_val}), that the \inlinevcc{dist} and \inlinevcc{num}
arrays are consistent with respect to unreachable vertices, \ie either both
are finite or both are infinite (in \inlinevcc{check_no_path}), that the
triangle inequality property (Section~\ref{sec:shortest-isabelle}) is
fulfilled (in \inlinevcc{check_trian}), and that the parent edge of every
vertex \inlinevcc{v} defines its distance value (in
\inlinevcc{check_just}).

There is a subtle point in the checker code. We want to establish the
triangle inequality ($\dist(u) + cost(u,v) \ge \dist(v)$ for all edges
$(u,v)$) and the distance justification ($\dist(u) + cost(u,v) = \dist(v)$
if $(u,v)$ is the parent edge of $v$) over the extended
natural numbers $\Nat \cup \{\infty\}$. However, C knows
only finite precision arithmetic. We solve the case of infinite distances
by appropriate case distinctions. We solve the case of potential overflow
in finite precision arithmetic as follows: Distances are of type
\inlinevcc{int}, \ie from the set $\{-2^{31}, \ldots, 2^{31} - 1\}$
on a 32-bit platform, and edge costs are of type
\inlinevcc{ushort}, \ie between $0$ and $2^{16} - 1$, and hence
contained in the set of nonnegative values of type \inlinevcc{int}.  In
arithmetic expressions, we cast all nonnegative values to
\inlinevcc{unsigned} with range $0 \ldots 2^{32} - 1$.  This guarantees
that finite precision arithmetic is exact and allows VCC to conclude
equalities and inequalities between natural numbers.

%
Note that there is an alternative approach where \inlinevcc{parent_edge} is not part of the 
witness. In that case \inlinevcc{check_just} has to be rewritten.
  When considering a node
\inlinevcc{v}, it has to iterate over all edges into \inlinevcc{v} to find
the edge that defines \inlinevcc{dist[v]}. An efficient implementation of
this iteration requires providing each vertex with the list of edges into
it.

\begin{listing}
\begin{vcclst}
bool check_start_val(Vertex s, int* dist, int* num)
{
    return dist[s] == 0;
}

bool check_no_path(Graph* G, int* dist, int* num)
{
    Vertex v;

    for (v = 0; v < G->n; v++)
    {
        if (INF(dist[v]) != INF(num[v])) return FALSE;
    }
    return TRUE;
}

int check_trian(Graph* G, ushort* cost, int* dist)
{
    Edge_Id e; Vertex source, target;

    for (e = 0; e < G->m; e++)
    {
        source = G->es[e].s;
        target = G->es[e].t;

        if (INF(dist[source])) continue;
        if (INF(dist[target])) return FALSE;
        if (VAL(dist[target]) > VAL(dist[source]) + cost[e]) return FALSE;
    }
    return TRUE;
}

bool check_just(Graph* G, Vertex s, ushort* cost, int* dist, int* parent_edge, int* num)
{
    Vertex v, source; Edge_Id e;

    for (v = 0; v < G->n; v++)
    {
        if (v == s || INF(num[v])) continue;
        if (parent_edge[v] < 0 || ((Edge_Id)parent_edge[v]) >= G->m) return FALSE;

        e = (Edge_Id)parent_edge[v];
        source = G->es[e].s;

        if (G->es[e].t != v) return FALSE;
        if (INF(dist[source]) || VAL(dist[v]) != VAL(dist[source]) + cost[e]) return FALSE;
        if (INF(num[source]) || VAL(num[v]) != VAL(num[source]) + 1) return FALSE;
    }
    return TRUE;
}
\end{vcclst}
\caption{Functions composing the shortest-path checker. The
  predicate \inlinevcc{INF(x)} abbreviates \inlinevcc{x < 0}, and
  \inlinevcc{VAL(x)} stands for the type cast \inlinevcc{(Nat)x};
  \inlinevcc{Nat} is the C type \inlinevcc{unsigned} as defined in
  Listing~\ref{lst:c-graph}.}
  \label{lst:shortest-checker}
\end{listing}

\subsubsection{Checker Correctness}
  \label{sec:shortest-vcc}

We now define our abstract specification for the shortest-path checker.
We use the same data structures as in the previous case study (Section~\ref{sec:connected-vcc}) with the exception that the
\inlinevcc{num} mapping now takes vertices to extended naturals ($\Nat \cup \{\infty\}$), represented by the type \inlinevcc{Enat}. Extended naturals provide an explicit value for infinity:

\begin{vccquote}
_(datatype \Enat
{
    case \enat_inf();
    case \enat_val(\natural n);
})
\end{vccquote}

\noindent
We define functions \inlinevcc{\\is_enat_inf} to check whether an extended
natural is infinity and \inlinevcc{\\enat_val_of} to convert an extended
natural distinct from infinity into the corresponding natural number.  For better
readability, we will write \inlinevcc{a $\;=_e\;$ $\infty$}
for \inlinevcc{\\is_enat_inf(a)}.  Moreover, we provide the
predicates \inlinevcc{\\enat\_eq} (abbreviated by $=_e$) and
\inlinevcc{\\enat_le} ($\le_e$) to decide equality and
less-or-equal of two extended naturals as well as a function
\inlinevcc{\\enat_add} ($+_e$) for the sum
of an extended natural and a natural number:

\begin{vccquote}
_(def \bool \enat_eq(\Enat e1, \Enat e2)
{
    return
        (e1 $=_e$ $\infty$ &&& e2 $=_e$ $\infty$) |||
        (e1 $\neq_e$ $\infty$ &&& e2 $\neq_e$ $\infty$ &&& \enat_val_of(e1) = \enat_val_of(e2));
})

_(def \bool \enat_le(\Enat e1, \Enat e2)
{
    return e2 $=_e$ $\infty$ ||| (e1 $\neq_e$ $\infty$ &&& e2 $\neq_e$ $\infty$ &&& \enat_val_of(e1) <== \enat_val_of(e2));
})

_(def \Enat \enat_add(\Enat e, \natural n)
{
    return (e $=_e$ $\infty$) ? \enat_inf() : \enat_val(\enat_val_of(e) + n);
})
\end{vccquote}

\noindent
The type of extended natural numbers is also used for the abstract
representation of the \inlinevcc{dist} array.
Again, as in the previous case study, concrete types and abstract types are
sufficiently similar such that abstraction functions relating one to the
other are straightforward to define.  We omit them here.

The preconditions of this case study are that \inlinevcc{G} is a wellformed
graph and that the source vertex \inlinevcc{s} is a vertex of
\inlinevcc{G}, \ie that \inlinevcc{s < G.num_verts} holds.
We formalize the witness predicate as a conjunction of four properties, one
for each of the four checker functions in
Listing~\ref{lst:shortest-checker}.

\begin{description}
\item[\textsf{check\_start\_val}:]
  Vertex \inlinevcc{s} is indeed the starting point:

\begin{vccquote}
dist[s] $=_e$ \enat_val(0)
\end{vccquote}

\item[\textsf{check\_no\_path}:] The \inlinevcc{num} mapping and the \inlinevcc{dist}
  mapping are consistent with respect to unreachable vertices, \ie both
  are either finite or infinite:

\begin{vccquote}
forall \Vertex v; v < G.num_verts ==> (dist[v] $=_e$ $\infty$ <--> num[v] $=_e$ $\infty$)
\end{vccquote}

\item[\textsf{check\_trian}:] The triangle inequality holds for all edges of
  the graph:

\begin{vccquote}
forall \Edge_Id i; i < G.num_edges ==>
    dist[G.edge[i].trg] $\le_e$ dist[G.edge[i].src] $+_e$ cost[i]
\end{vccquote}

\item[\textsf{check\_just}:] The parent edges encode a tree rooted at
  \inlinevcc{s} and define the distance values of reachable vertices:

\begin{vccquote}
forall \Vertex v;
    v < G.num_verts &&& v !!= s &&& num[v] $\neq_e$ $\infty$ ==>
    \is_some(parent_edge[v]) &&& \the(parent_edge[v]) < G.num_edges &&&
    v = G.edge[\the(parent_edge[v])].trg &&&
    dist[v] $=_e$ dist[G.edge[\the(parent_edge[v])].src] $+_e$ cost[\the(parent_edge[v])] &&&
    num[v] $=_e$ num[G.edge[\the(parent_edge[v])].src] $+_e$ 1
\end{vccquote}
\end{description}

\noindent
We have verified that each of these four properties holds if and only if the
corresponding checker function succeeds.  The three functions
\inlinevcc{check_no_path}, \inlinevcc{check_trian}, and
\inlinevcc{check_just} need additional annotations before VCC can verify
their correctness.  The loops in these functions have to be annotated with
loop invariants that are, just as in the previous case study (Section~\ref{sec:connected-vcc}), only simple variants of the
postconditions above.  Also, as for the connected-components checker, we
need to explicitly state properties that are
violated before every \inlinevcc{return FALSE} statement.  Such properties
are reformulations of concrete properties on the abstract level.  In
addition, both \inlinevcc{check_trian} and \inlinevcc{check_just} require
 the graph under consideration to be wellformed, and
\inlinevcc{check_just}, furthermore, requires that \inlinevcc{num} and
\inlinevcc{dist} are consistent (the postcondition of
\inlinevcc{check_no_path}). We add these requirements as preconditions to
the checker functions.

In order to be able to express the postcondition of the shortest-path checker,
we define sequences of edges as a recursive datatype:

\begin{vccquote}
_(datatype \Path
{
    case none();
    case path(\Edge_Id i, \Path p);
})
\end{vccquote}

\noindent
Only particular instances of this datatype are paths in the given
graph~\inlinevcc{G}.  To qualify valid paths, we proceed in two steps. We
first define a predicate that expresses the conditions under which a
sequence of edges constitutes a walk in graph \inlinevcc{G} from vertex
\inlinevcc{u} to vertex \inlinevcc{v} (Listing~\ref{lst:vcc-is-walk}).
Second, we define a predicate to describe when the set of vertices of an
edge sequence is distinct (Listing~\ref{lst:vcc-distinct-verts}). A
path from vertex \inlinevcc{u} to vertex \inlinevcc{v} in \inlinevcc{G} is
a walk \inlinevcc{p} from \inlinevcc{u} to \inlinevcc{v} with distinct
vertices. We define this as a predicate \inlinevcc{\\is_path(G, p, u, v)}.

\begin{listing}
\begin{vcclst}
_(def \bool \is_walk(\Graph G, \Path p, \Vertex u, \Vertex v)
{
    switch (p)
    {
        case none(): return u = v;
        case path(i, q):
            return i < G.num_edges &&& u = G.edge[i].src &&& \is_walk(G, q, G.edge[i].trg, v);
    }
})
\end{vcclst}
\caption{A walk from vertex \inlinevcc{u} to vertex \inlinevcc{v} is a
  finite sequence of connected edges of graph \inlinevcc{G} where the
  source vertex of the first edge is \inlinevcc{u} and the target vertex of
  the last edge is \inlinevcc{v}.}
  \label{lst:vcc-is-walk}
\end{listing}

\begin{listing}
\begin{vcclst}
_(def \bool \occurs(\Graph G, \Vertex u, \Vertex v, \Path p)
    _(decreases \size(p))
{
    switch (p)
    {
        case none(): return u = v;
        case path(i, q): return u = G.edge[i].src ||| \occurs(G, u, G.edge[i].trg, q);
    }
})

_(def \bool \distinct_verts(\Graph G, \Path p)
{
    switch (p)
    {
        case none(): return \true;
        case path(i, q): return !!\occurs(G, G.edge[i].src, G.edge[i].trg, q) &&& \distinct_verts(G, q);
    }
})
\end{vcclst}
\caption{Predicate \inlinevcc{\\distinct\_verts(G, p)} holds if the set of
  vertices connected by path \inlinevcc{p} is distinct. Predicate
  \inlinevcc{\\occurs(G, u, v, p)} is true if and only if \inlinevcc{u} is
  either equal to \inlinevcc{v} or equal to any vertex touched by path
  \inlinevcc{p}.}
  \label{lst:vcc-distinct-verts}
\end{listing}

With a recursive function \inlinevcc{\\path\_cost} that computes for a
given path its length using the \inlinevcc{cost} mapping, we can finally
state the postcondition of the shortest path checker:

\begin{vccquote}
(forall \Vertex v; v < G.num_verts ==>
    !!\is_enat_inf(dist[v]) <--> (exists \Path p; \is_path(G, p, s, v))) &&&
(forall \Vertex v; v < G.num_verts &&& !!\is_enat_inf(dist[v]) ==>
    (forall \Path p; \is_path(G, p, s, v) ==> \enat_val_of(dist[v]) <== \path_cost(cost, p)) &&&
    (exists \Path p; \is_path(G, p, s, v) &&& \enat_val_of(dist[v]) = \path_cost(cost, p)))
\end{vccquote}

\noindent
We formally prove this property, under the assumption of
the precondition and the witness predicate, in Isabelle below.

\subsubsection{Proof of the Witness Property for the Shortest-Path Checker}
  \label{sec:shortest-isabelle}
  
We here present the outline of the Isabelle/HOL proof of the witness property,
namely that if $G$ is wellformed (the
precondition of this case study) and if the witness predicate holds, then 
${\mathit dist}(v)$ is indeed the shortest-path
distance from $s$ to $v$ for all vertices $v \in V$. 
For the full formal proof, please check the companion website.

\begin{listing}
\begin{isalst}
locale basic_sp = pseudo_digraph +
    fixes dist :: "'a \<Rightarrow> ereal"
    fixes \c :: "'b \<Rightarrow> real"
    fixes \s :: "'a"
    assumes general_source_val: "dist \s \<le> \0"
    assumes trian: "\<And>\e. \e \<in> edges \G \<Longrightarrow> dist (target \G \e) \<le> dist (start \G \e) + \c \e"

locale basic_just_sp = basic_sp +
    fixes num :: "'a \<Rightarrow> enat"
    assumes just:
      "\<And>\v. \v \<in> verts \G \<Longrightarrow> \v \<noteq> \s \<Longrightarrow> num \v \<noteq> \<infinity> \<Longrightarrow>
          \<exists> \e \<in> edges \G.
              \v = target \G \e \<and>
              dist \v = dist (start \G \e) + \c \e  \<and>
              num \v = num (start \G \e) + enat \1"

locale shortest_path_pos_cost = basic_just_sp +
    assumes s_in_G: "\s \<in> verts \G"
    assumes start_val: "dist \s = \0"
    assumes no_path: "\<And>\v. \v \<in> verts \G \<Longrightarrow> (dist \v = \<infinity> \<longleftrightarrow> num \v = \<infinity>)"
    assumes pos_cost: "\<And>\e. \e \<in> edges \G \<Longrightarrow> \0 \<le> \c \e"
\end{isalst}
\caption{Locales for the shortest-paths proof in Isabelle.}
\label{lst:shortest-path-locale-isabelle}
\end{listing}

Listing~\ref{lst:shortest-path-locale-isabelle} shows our Isabelle locales. 
We separate the assumptions into three locales to avoid the use of unneeded 
assumptions when proving intermediate lemmas. This makes the intermediate 
lemmas more general, and hence, usable in other contexts\footnote{For example, 
we used some of those lemmas also for the verification of a checker for the 
shortest-path problem with general edge weights (not only nonnegative edge 
weights as in this case study)~\cite{rizkallah:2012:shortest}.}. The first locale
\inlineisa{basic_sp} subsumes the locale \inlineisa{pseudo_digraph} mentioned 
in Section~\ref{sec:connected-isabelle}. Moreover, it assumes it is given 
the function $\dist: V\to (\mathbb{R}\cup\{\infty, -\infty\})$, an edge cost function 
$c: E \to \mathbb{R}$, and a start vertex $s$.

Let $\mu$ be a function that takes a cost function $c$ on edges and two 
vertices $s$ and $v$ and returns the cost of a shortest path from $s$ to
$v$ in $G$ using the cost function $c$. We split the proof
into two parts. First, 
we prove a lemma $\mathit{\dist\dash{}le\dash{}\mu}$ using the locale 
\inlineisa{basic_sp}. The lemma states that $\dist\ v\leq\mu\ c\ s\ v$ for every 
vertex $v\in V$. Then, we prove a lemma $\mathit{dist\dash{}ge\dash{}\mu}$ 
using the locale \inlineisa{basic_just_sp}. The lemma  states that $\dist\ v\geq\mu\ c\ s\ v$ 
for every vertex $v\in V$ under some extra assumptions 
(Listing~\ref{lst:shortest-path-dist-ge-mu-isabelle}). Later,
we show that these extra
assumptions hold in the locale \inlineisa{shortest_path_pos_cost}. Hence, we obtain 
a theorem stating that $\dist\ v= \mu\ c\ s\ v$ for every $v\in V$ using the locale 
\inlineisa{shortest_path_pos_cost}. 

\begin{listing}
\begin{isalst}
lemma (in basic_just_sp) dist_ge_\<mu>:
    fixes \v :: 'a
    assumes "\v \<in> verts \G"
    assumes "num \v \<noteq> \<infinity>"
    assumes "dist \v \<noteq> -\<infinity>"
    assumes "\<mu> \c \s \s = ereal \0"
    assumes "dist \s = \0"
    assumes "\<And>\u. \u \<in> verts \G \<Longrightarrow> \u \<noteq> \s \<Longrightarrow> num \u \<noteq> \<infinity> \<Longrightarrow> num \u \<noteq> enat \0"
    shows "dist \v \<ge> \<mu> \c \s \v"
\end{isalst}
\caption{The central lemma of the shortest-paths proof in Isabelle}
\label{lst:shortest-path-dist-ge-mu-isabelle}
\end{listing}

Linking this Isabelle proof with the specification exported from VCC is a
matter of translating from one representation to another. We intentionally 
chose to define paths and their costs in VCC (Section~\ref{sec:shortest-vcc}) 
similar to the way they are defined in the Isabelle graph library to ease our 
translation proofs. Since there are several more concepts to relate than in the 
previous checker (Section~\ref{sec:connected-isabelle}), our proofs for the shortest-path
checker are more tedious. Nevertheless, no complex reasoning is required.
We establish that the assumptions of the \inlineisa{shortest_path_pos_cost}
locale are implied by the checker precondition and witness predicate, and
we prove that our final theorem proved in that locale implies the checker
postcondition.

\subsection{Maximum Cardinality Matching in Graphs}
  \label{sec:case-study-matching}
  Our third case study is about maximum cardinality matching in general graphs.
A \emph{matching} in a graph $G$ is a subset $M$ of the edges of $G$ such
that no two share an endpoint.  A matching has maximum cardinality if its
cardinality is at least as large as that of any other matching.
Figure~\ref{fig:matching} shows a graph, a maximum cardinality matching,
and a witness of this fact.  An \emph{odd-set cover} $L$ of a
graph $G$ is a labeling of the vertices of $G$ with integers such that
every edge of $G$ is either incident to a vertex labeled 1 or connects two
vertices labeled with the same number $i$ and $i \ge 2$.

\begin{figure}[t]
\centering
\begin{tikzpicture}
\tikzstyle{circ}=[draw, fill=white, circle, inner sep=2pt, drop shadow]
\tikzstyle{norm}=[]
\tikzstyle{strong}=[line width=3pt]
\node[circ] (n0) at (0.7,2.1) {1};
\node[circ] (n1) at (0,1.4) {0};
\node[circ] (n2) at (1.4,1.4) {1};
\node[circ] (n3) at (2.8,1.4) {0};
\node[circ] (n4) at (4.2,1.4) {1};
\node[circ] (n5) at (3.5,0.7) {0};
\node[circ] (n6) at (0,0) {2};
\node[circ] (n7) at (1.4,0) {2};
\node[circ] (n8) at (2.8,0) {1};
\node[circ] (n9) at (4.2,0) {0};
\node[circ] (n10) at (0.7,-0.7) {2};
\node[circ] (n11) at (3.5,-0.7) {0};
\draw[strong] (n0) -- (n1);
\draw[norm] (n1) -- (n2);
\draw[norm] (n0) -- (n2);
\draw[strong] (n2) -- (n3);
\draw[norm] (n3) -- (n4);
\draw[norm] (n4) -- (n5);
\draw[norm] (n2) -- (n7);
\draw[norm] (n3) -- (n8);
\draw[strong] (n4) -- (n9);
\draw[strong] (n6) -- (n7);
\draw[norm] (n7) -- (n8);
\draw[norm] (n8) -- (n9);
\draw[norm] (n6) -- (n10);
\draw[norm] (n7) -- (n10);
\draw[strong] (n8) -- (n11);
\end{tikzpicture}
\caption{The vertex labels certify that the indicated matching is of
  maximum cardinality: All edges of the graph have either both endpoints
  labeled as 2 or at least one endpoint labeled as 1. Any matching can
  hence use at most one edge with both endpoints labeled 2 and at most four
  edges that have an endpoint labeled 1. Therefore, no matching has more
  than five edges. The matching shown consists of five edges (in bold).} 
  \label{fig:matching} 
\end{figure}
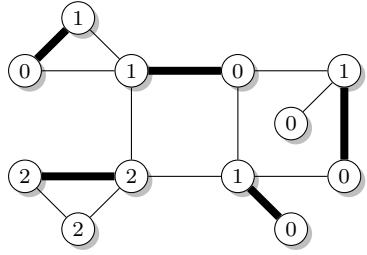

\begin{theorem}[Edmonds~\cite{edmonds:1965:matching}]
\label{thm:edm}
Let $M$ be a matching in a graph $G$, and let $L$ be an odd-set cover of
$G$.  For any $i \ge 0$, let $n_i$ be the number of vertices labeled $i$.
If
\begin{equation}
\abs{M} = n_1 + \sum_{i\ge 2}\lfloor n_i/2\rfloor,
\label{eq:cardM}
\end{equation}
then $M$ is a maximum cardinality matching.
\end{theorem}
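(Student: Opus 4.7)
The plan is to prove the upper bound $|M'| \le n_1 + \sum_{i \ge 2} \lfloor n_i/2 \rfloor$ for \emph{every} matching $M'$ of $G$. Combined with the hypothesis $|M| = n_1 + \sum_{i \ge 2} \lfloor n_i/2 \rfloor$, this immediately yields $|M'| \le |M|$ for every matching $M'$, which is exactly the statement that $M$ has maximum cardinality. Note that the assumption that $M$ itself is a matching is used only to give meaning to the comparison and in fact the argument shows that the right-hand side of (\ref{eq:cardM}) is a universal upper bound on matching size under any odd-set cover.

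To establish the upper bound, I would fix an arbitrary matching $M'$ and partition its edges according to the odd-set cover $L$. For each $i \ge 2$, let $M'_i$ be the set of edges of $M'$ whose endpoints are both labeled $i$, and let $M'_1$ be the remaining edges, each of which must, by definition of an odd-set cover, have at least one endpoint labeled $1$. This gives a disjoint decomposition $M' = M'_1 \cup \bigcup_{i \ge 2} M'_i$, so $|M'| = |M'_1| + \sum_{i \ge 2} |M'_i|$.

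Now I would bound each piece separately using the matching property (no two edges of $M'$ share an endpoint). For $|M'_1|$: the map sending each $e \in M'_1$ to one of its endpoints labeled $1$ is injective, so $|M'_1| \le n_1$. For each $|M'_i|$ with $i \ge 2$: the edges in $M'_i$ are pairwise vertex-disjoint and each uses exactly two vertices labeled $i$, so $2|M'_i| \le n_i$, and since $|M'_i|$ is an integer, $|M'_i| \le \lfloor n_i/2 \rfloor$. Summing gives $|M'| \le n_1 + \sum_{i \ge 2} \lfloor n_i/2 \rfloor$, which together with hypothesis (\ref{eq:cardM}) completes the proof.

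The only mildly subtle step is the first one, where the partition depends on a choice (for edges with two endpoints labeled $1$, assigning them to $M'_1$; and for edges with one endpoint labeled $1$ and the other labeled differently, again assigning them to $M'_1$); one must check that the definition of odd-set cover forces every edge into exactly one of the classes $M'_1, M'_2, M'_3, \ldots$. A minor formal care point is the handling of the sum $\sum_{i \ge 2} \lfloor n_i/2 \rfloor$: only finitely many $n_i$ are nonzero because $V$ is finite, so this sum is well-defined. I expect the Isabelle formalization to spend most of its effort on this bookkeeping (finiteness, disjointness of the partition, and the injection arguments giving the two inequalities) rather than on any deep reasoning.
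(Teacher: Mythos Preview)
Your proposal is correct and follows essentially the same argument as the paper: partition an arbitrary matching into $M'_1$ (edges with an endpoint labeled $1$) and $M'_i$ for $i\ge 2$ (edges with both endpoints labeled $i$), then bound $|M'_1|\le n_1$ via an injection into the label-$1$ vertices and $|M'_i|\le\lfloor n_i/2\rfloor$ via the disjoint-endpoints count. Your remarks about the bookkeeping (injectivity, disjointness, finiteness of the sum) even anticipate exactly where the paper's Isabelle formalization spends its effort.
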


\begin{proof}
Let $N$ be any matching in $G$.  For $i \ge 2$, let $N_i$ be the edges in
$N$ that connect two vertices labeled $i$, and let $N_1$ be the remaining
edges in $N$. Then, by the definition of odd-set cover, every edge in $N_1$
is incident to a vertex labeled 1. Since edges in a matching do not share
endpoints, we have
\[\abs{N_1} \le n_1\;\;\text{and}\;\;\abs{N_i} \le \lfloor n_i/2 \rfloor
\;\;\text{for $i \ge 2$.}\]
Thus, $\abs{N} \le  n_1 + \sum_{i\ge 2} \lfloor n_i / 2 \rfloor = \abs{M}$. 
\qed
\end{proof}

For every maximum cardinality
matching $M$ there is an odd-set cover $L$ satisfying
equality~(\ref{eq:cardM}); the proof of this is nontrivial and of no importance 
for the purpose of this paper. The cover uses nonnegative vertex labels in the
range $0$ to $\abs{V} - 1$ and all $n_i$'s with $i \ge 2$ are odd.  The
\emph{certifying algorithm for maximum cardinality matching} in LEDA
returns a matching $M$ and an odd-set cover $L$ such that (\ref{eq:cardM})
holds. The relationship to Section~\ref{sec:outline-of-approach} is as
follows:

\begin{center}
\begin{tabular}{r@{\hspace{1em}=\hspace{1em}}p{9cm}}
input $x$ & an undirected graph $G$ \\
output $y$ & a set of edges $M$ \\
witness $w$ & a vertex labeling $L$ \\
$\Pre(x)$ & $G$ and $M$ are wellformed and have no self-loops \\
$\Wit(x, y, w)$ & $M$ is a matching in $G$, $L$ is an odd-set cover for
$G$, and Equation~(\ref{eq:cardM}) holds \\
$\Post(x,y)$ & $M$ is a maximum cardinality matching in $G$.
\end{tabular}
\end{center}

Theorem~\ref{thm:edm} is the witness property. We give a formal proof for
it in Section~\ref{sec:matching-isabelle}. It is easy to write a correct program that
checks whether a set of edges is a matching and a vertex labeling is an
odd-set cover, which together satisfy Equation~\ref{eq:cardM}, see
Section~\ref{sec:matching-checker}.  In Section~\ref{sec:matching-vcc}, we
describe the verification of such a checker. The relevant files in the
companion website are check\_matching.c (C-code and checker correctness),
check\_matching.thy (Isabelle/HOL representation of the VCC checker
specification and witness property), Matching.thy (abstract verification of
the witness property in a locale), and Matching\_Link.thy (instantiation of
the locale with the Isabelle/HOL representation of the VCC specification).

This case study is a modified version of the one we present
in~\cite{alkassar:2011:vcc}.  The proof of the witness property is very
similar to the one published in~\cite{rizkallah:2011:matching} except that
it uses the Isabelle graph library, which has been developed meanwhile.

\subsubsection{Maximum-Cardinality-Matching Checker}
  \label{sec:matching-checker}
 
We build the checker using the graph data structure as in the previous case
studies (Listing~\ref{lst:c-graph}).  We assume that graphs are
wellformed and have neither self-loops nor duplicate edges.  We treat the
edges of a graph as undirected edges.  Matchings are also represented by
graphs.  We require an additional witness in the form of an array
\inlinevcc{f} that maps edge identifiers of the matching to edge identifiers
of the input graph.  For instance, if a graph consists of three edges
(identified as 0, 1 and 2) and the computed matching consists of the third
edge (\ie 2), then \inlinevcc{f} would be an array with a single element 2
indicating how the only edge of the matching corresponds to the edges of
the input graph.  Finally, the vertex labeling is represented by an array
\inlinevcc{osc}, which is indexed by vertices and stores elements of
type \inlinevcc{Nat}.
The checker function requires an auxiliary array \inlinevcc{check} that
can store as many elements of type \inlinevcc{Nat} as there are vertices in
the input graph, but at least two.  We expect that this array is allocated
elsewhere and given as input to the checker.

In addition to the checker function, there are four helper functions
(Listing~\ref{lst:matching-checker}).  The checker succeeds if the first
three of them succeed and if the fourth function returns a value that is
equal to the number of edges of the matching \inlinevcc{M}.  In short, the
helper functions perform the following tasks.  The function
\inlinevcc{check_subset} checks whether \inlinevcc{M} is a subgraph of
\inlinevcc{G} with respect to the mapping \inlinevcc{f}.  The function 
\inlinevcc{check_matching} checks that \inlinevcc{M} is indeed a
matching (contains no two edges that are incident).  The function
\inlinevcc{check_osc} checks whether the vertex labeling is an odd-set cover
and that vertex labels are in the range $\{0, \ldots,
\text{\inlinevcc{G->n}} - 1\}$.  Finally, the function \inlinevcc{weight}
computes the sum on the right-hand side of Equation~(\ref{eq:cardM}).  This
computation is optimized by first searching for the greatest vertex label,
which can be considerably smaller than the maximal $\text{\inlinevcc{G->n}}
- 1$, and then summing up partial sums only until this greatest label.  The
main checker function passes the auxiliary array \inlinevcc{check} to
\inlinevcc{check_matching} as the \inlinevcc{degree_in_M} argument and to
\inlinevcc{weight} as the \inlinevcc{count} argument.

\begin{listing}
\begin{vcclst}
bool check_subset(Graph* G, Graph* M, Nat* f)
{
    Edge_Id e;

    for (e = 0; e < M->m; e++)
    {
        if (f[e] >= G->m) return FALSE;
        if (M->es[e].s == G->es[f[e]].s && M->es[e].t == G->es[f[e]].t) continue;
        if (M->es[e].s == G->es[f[e]].t && M->es[e].t == G->es[f[e]].s) continue;
        return FALSE;
    }
    return TRUE;
}

bool check_matching(Graph* M, Nat* degree_in_M)
{
    Vertex v; Edge_Id e;

    for (v = 0; v < M->n; v++) degree_in_M[v] = 0;
    for (e = 0; e < M->m; e++)
    {
        if (degree_in_M[M->es[e].s] == 1 || degree_in_M[M->es[e].t] == 1) return FALSE;
        degree_in_M[M->es[e].s] = 1;
        degree_in_M[M->es[e].t] = 1;
    }
    return TRUE;
}

bool check_osc(Graph* G, Nat* osc)
{
    Edge_Id e; Vertex v, w;

    for (v = 0; v < G->n; v++) if (osc[v] >= G->n) return FALSE;
    for (e = 0; e < G->m; e++)
    {
        v = G->es[e].s;
        w = G->es[e].t;
        if (osc[v] == 1 || osc[w] == 1 || (osc[v] == osc[w] && osc[v] $\ge$ 2)) continue;
        return FALSE;
    }
    return TRUE;
}

Nat weight(Graph* G, Nat* osc, Nat* count)
{
    Vertex v; Nat c, s, max = 1, r = (G->n > 2) ? G->n : 2;

    for (c = 0; c < r; c++) count[c] = 0;
    for (v = 0; v < G->n; v++)
    {
        count[osc[v]] = count[osc[v]] + 1;
        if (osc[v] > max) max = osc[v];
    }
    s = count[1];
    for (c = 2; c < max + 1; c++) s += count[c] / 2;
    return s;
}
\end{vcclst}

\caption{Helper functions of the maximum-cardinality-matching checker.}
  \label{lst:matching-checker}
\end{listing}

\subsubsection{Checker Correctness}
  \label{sec:matching-vcc}

We build on the abstract graph data structure of
Listing~\ref{lst:spec-graph}.  We require that graphs are wellformed
and contain no self-loops:

\begin{vccquote}
forall \Edge_Id i; i < G.num_edges ==> G.edge[i].src !!= G.edge[i].trg
\end{vccquote}

\noindent
nor duplicate edges:

\begin{vccquote}
forall \Edge_Id i1, i2; i1 < G.num_edges &&& i2 < G.num_edges &&& i1 !!= i2 ==>
    G.edge[i1].src !!= G.edge[i2].src ||| G.edge[i1].trg !!= G.edge[i2].trg
\end{vccquote}

\noindent
An abstract vertex labeling \inlinevcc{L} is a mapping from vertices to
natural numbers.  The mapping \inlinevcc{f} from edge identifiers to edge
identifiers has a straightforward representation as an abstract mapping.  We
omit here, as in the previous case studies, the description of abstraction
functions from concrete to abstract values.

The witness predicate is a conjunction of four predicates, each related to
one of the helper functions in Listing~\ref{lst:matching-checker}.

\begin{description}
\item[\textsf{check\_subset}:] \inlinevcc{M} must be a subgraph of
  \inlinevcc{G} \wrt the edge mapping \inlinevcc{f}, \ie every edge of
  \inlinevcc{M} must also be an edge of \inlinevcc{G} modulo symmetry of
  edges:

\begin{vccquote}
forall \Edge_Id i; i < M.num_edges ==>
    f[i] < G.num_edges &&&
    (M.edge[i].src = G.edge[f[i]].src $\land$ M.edge[i].trg = G.edge[f[i]].trg |||
     M.edge[i].src = G.edge[f[i]].trg $\land$ M.edge[i].trg = G.edge[f[i]].src)
\end{vccquote}

\item[\textsf{check\_matching}:] \inlinevcc{M} must be a matching, \ie no
  two edges of \inlinevcc{M} have a vertex in common:

\begin{vccquote}
forall \Edge_Id i1, i2;
    i1 < M.num_edges &&& i2 < M.num_edges &&& i1 !!= i2 ==>
    M.edge[i1].src !!= M.edge[i2].src &&& M.edge[i1].src !!= M.edge[i2].trg &&&
    M.edge[i1].trg !!= M.edge[i2].src &&& M.edge[i1].trg !!= M.edge[i2].trg
\end{vccquote}

\item[\textsf{check\_osc}:] \inlinevcc{L} must be an odd-set cover of
  \inlinevcc{G}, \ie for every edge of \inlinevcc{G}, one of the edge's
  vertices is labeled 1 or both vertices are labeled by the same number
  greater than or equal to 2:

\begin{vccquote}
forall \Edge_Id i; i < G.num_edges ==>
    L[G.edge[i].src] = 1 |||
    L[G.edge[i].trg] = 1 |||
    L[G.edge[i].src] = L[G.edge[i].trg] &&& L[G.edge[i].src] >== 2
\end{vccquote}

\item[\textsf{weight}:] Equation~(\ref{eq:cardM}) must hold.  We define it
  stepwise.  The number of vertices labeled with \inlinevcc{c} is defined
  recursively:

\begin{vccquote}
_(def \natural \label_count(\Label L, \natural c, \natural i)
{
    return (i = 0) ? 0 : ((L[i - 1] = c) ? 1 : 0) + \label_count(L, c, i - 1);
})
\end{vccquote}

  \noindent
  We have $n_{\text{\inlinevcc{c}}} = \text{\inlinevcc{\\label_count(L, c,
  G.num_verts)}}$ for a vertex label \inlinevcc{c}.  The sum of these
  numbers for labels greater than 1 is again defined recursively:

\begin{vccquote}
_(def \natural \rec_weight(\Label L, \natural n, \natural i)
{
    return (i < 2) ? 0 : \label_count(L, i, n) / 2 + \rec_weight(L, n, i - 1);
})
\end{vccquote}

  \noindent
  We have $\sum_{i\ge 2} \lfloor n_i / 2 \rfloor =
  \text{\inlinevcc{\\rec_weight(L, G.num_verts, m)}}$ where \inlinevcc{m}
  is the greatest label assigned to any vertex by \inlinevcc{L}.
  The complete sum is then:

\begin{vccquote}
_(def \natural \full_weight(\Label L, \natural n, \natural i)
{
    return \label_count(L, 1, n) + \rec_weight(L, n, i);
})
\end{vccquote}

  \noindent
   That is, we have $n_1 + \sum_{i\ge 2} \lfloor n_i / 2 \rfloor =
   \text{\inlinevcc{\\full_weight(L, G.num_verts, m)}}$ with the same
   \inlinevcc{m} as before.  Finally, the predicate capturing
   Equation~(\ref{eq:cardM}) is as follows:

\begin{vccquote}
M.num_edges = \full_weight(L, G.num_verts, m) &&&
forall \Vertex v; v < G.num_verts ==> L[v] <== m
\end{vccquote}

\end{description}

Verifying the correctness of the checker
(Section~\ref{sec:matching-checker}) is done in the same way as the earlier case
studies for the first three predicates above.  We only have to provide the
right loop invariants, and simple variations of the predicates to be proved
are sufficient.  In \inlinevcc{check_matching}, we need additional loop
invariants.  Along with the first loop, we accumulate the knowledge about
the initialization of the \inlinevcc{degree_in_M} array by specifying that
the first positions of the array have already been set to \inlinevcc{0}:

\begin{vccquote}
forall Nat u; u < v ==> degree_in_M[u] = 0
\end{vccquote}

\noindent
Moreover, on the second loop, we need three additional loop invariants. One
invariant states that values stored in \inlinevcc{degree_in_M} are in
range:

\begin{vccquote}
forall Nat v; v < M->n ==> degree_in_M[v] $\le$ 1
\end{vccquote}

\noindent
Another invariant states that vertices, for which
\inlinevcc{degree_in_M} is still 0, cannot be part
of any already checked edge:

\begin{vccquote}
forall Nat v; v < M->n &&& degree_in_M[v] = 0 ==>
    forall Nat e1; e1 < e ==> M->es[e1].s !!= v &&& M->es[e1].t !!= v
\end{vccquote}

\noindent
Finally, vertices for which \inlinevcc{degree_in_M} has already been
set to 1 are mapped by a ghost mapping \inlinevcc{E} to their adjacent edge
in the matching \inlinevcc{M}:

\begin{vccquote}
forall Vertex v; v < M->n &&& degree_in_M[v] = 1 ==>
    E[v] < e &&& (M->es[E[v]].s = v ||| M->es[E[v]].t = v)
\end{vccquote}

\noindent
This invariant is required to prove completeness. We maintain this
invariant by updating the ghost mapping \inlinevcc{E} in the loop body
accordingly.

%

Proving the \inlinevcc{weight} function correct is the most intricate part
of the checker verification.  There are two properties that need to be shown:\ functional
correctness and the absence of overflows.  The former requires that the
function computes the $n_i$ and the overall sum of
Equation~(\ref{eq:cardM}) correctly, as specified in the above fourth
conjunct of the witness predicate.  The latter requires that the
additions in both the second and third loop do not overflow.  Surprisingly,
the absence of overflows is much harder to establish than functional
correctness.

We concentrate first on functional correctness.
The second loop updates the \inlinevcc{count} array in a way that
maintains the following property:

\begin{vccquote}
forall Nat j; j < r ==> count[j] = \label_count(L, j, v)
\end{vccquote}

\noindent
From this property follows this loop invariant on the third loop:

\begin{vccquote}
s = \full_weight(L, G->n, c - 1))
\end{vccquote}

\noindent
Together with a further loop invariant for the second loop to guarantee
that \inlinevcc{max} is the greatest label seen so far, we can conclude
that the \inlinevcc{weight} function is functionally correct.

The addition in the second loop can never overflow because in each loop
iteration, the loop variable is an upper limit on the value 
\inlinevcc{count[i]} for each label \inlinevcc{i}.  Concerning the
addition in the third loop, we observe that in each loop iteration, the
value of \inlinevcc{s} is bounded by the number of vertices in
\inlinevcc{G}.  To establish this property, we build up a ghost map
\inlinevcc{sum} in the second loop in such a way that in every iteration
of that loop, this map fulfills the following invariant:

\begin{vccquote}
sum[1] = count[1] $\land$
(forall Nat j; 1 < j &&& j < r ==> sum[j] = sum[j - 1] + count[j]) &&&
(forall Nat j; 1 < j &&& j < r ==> sum[j] <== v)
\end{vccquote}

\noindent
Maintaining this invariant requires updating the \inlinevcc{sum} map during
each iteration of the second loop.  We do so in a nested ghost loop in which
we propagate the increment that happened on the \inlinevcc{count} array to
every possibly affected element \inlinevcc{sum[j]}.

The postcondition of the checker expresses that the cardinality of any
matching of \inlinevcc{G} cannot be smaller than the cardinality of
\inlinevcc{M}:

\begin{vccquote}
forall \Graph M2; \Edge_Map I2; \is_subset(G, M2, I2) &&& \is_matching(M2) ==>
    M2.num_edges <== M.num_edges
\end{vccquote}

We give a formal proof that the checker preconditions and the witness
predicate imply this property in the following section.

\subsubsection{Proof of the Witness Property for the
  Maximum-Cardinality-Matching Checker}
  \label{sec:matching-isabelle}

We explain the Isabelle proof for the witness property, \ie for 
Theorem~\ref{thm:edm}. See Listing~\ref{lst:matching-locale} for an excerpt
of our formal Isabelle proof development that can be found in file
Matching.thy. The formal proof follows the scheme of the textbook proof
and is split into two main parts.

\begin{listing}
\begin{isalst}
type_synonym label = nat

definition disjoint_edges :: "('a, 'b) pre_graph \<Rightarrow> 'b \<Rightarrow> 'b \<Rightarrow> bool" where
    "disjoint_edges \G e1 e2 = (
         start \G e1 \<noteq> start \G e2 \<and> start \G e1 \<noteq> target \G e2 \<and> 
         target \G e1 \<noteq> start \G e2 \<and> target \G e1 \<noteq> target \G e2)"

definition matching :: "('a, 'b) pre_graph \<Rightarrow> 'b set \<Rightarrow> bool" where
    "matching \G \M = (
         \M \<subseteq> edges \G \<and>
         (\<forall>e1 \<in> \M. \<forall>e2 \<in> \M. e1 \<noteq> e2 \<longrightarrow> disjoint_edges \G e1 e2))"

definition OSC :: "('a, 'b) pre_graph \<Rightarrow> ('a \<Rightarrow> label) \<Rightarrow> bool" where
    "OSC \G \L = (
         \<forall>\e \<in> edges \G.
             \L (start \G \e) = \1 \<or> \L (target \G \e) = \1 \<or> 
             \L (start \G \e) = \L (target \G \e) \<and> \L (start \G \e) \<ge> \2)"

definition weight :: "label set \<Rightarrow> (label \<Rightarrow> nat) \<Rightarrow> nat" where
    "weight LV \f = \f \1 + \<Sum>\i \<in> LV. (\f \i) div \2"

definition \N :: "'a set \<Rightarrow> ('a \<Rightarrow> label) \<Rightarrow> label \<Rightarrow> nat" where
    "\N \V \L \i = card {\v \<in> \V. \L \v = \i}"

locale matching_locale = digraph +
    fixes maxM :: "'b set"
    fixes \L :: "'a \<Rightarrow> label"
    assumes \matching: "matching G maxM" 
    assumes \OSC: "OSC \G \L"
    assumes \weight: "card maxM = weight {\i \<in> \L ` verts \G. \i > \1} (\N (verts \G) \L)"
\end{isalst}
\caption{Definitions and locale for the matching proof in Isabelle.}
\label{lst:matching-locale}
\end{listing}

For $i \ge 2$, let $M_i$ be the edges in $M$ that connect two vertices
labeled $i$, and let $M_1$ be the remaining edges in $M$. The sets $M_i$, $i \ge 1$, are disjoint. 
We use the
definition of an odd-set cover to prove $M \subseteq  \bigcup_{i\ge1}
M_i$, and thus $\abs{M} \le \sum_{i\ge1}\abs{M_i}$ by disjointness of the sets $M_i$.  Let $V_i$ be the
vertices labeled~$i$, and let $n_i = \abs{V_i}$.  We formally prove:
$\abs{M_1} \le n_1$ and $\abs{M_i} \le \lfloor n_i/2 \rfloor$.

\newcommand{\pinV}{\mathit{endpoint}_{V_1}}

In order to prove $\abs{M_1} \le n_1$, we exhibit an injective function
from $M_1$ to $V_1$. We first prove, using the definition of an odd-set
cover, that every edge $e \in M_1$ has at least one endpoint in $V_1$. This
gives rise to a function $\pinV$ that maps from $M_1$ to $V_1$. We then use 
that edges in a matching do not share endpoints (i.e., edges in a
matching are disjoint when interpreted as sets) to conclude that $\pinV$
is injective. This establishes $\abs{M_1} \le \abs{V_i}$.

For $i\ge2$, the proof of the inequality $\abs{M_i} \le \lfloor n_i/2
\rfloor$ is similar but more involved. $M_i$ is a set of edges.  If we
represent edges as sets, each with cardinality two, then $M_i$ is a
collection of sets.  We define the set of vertices $V^\prime_i$ to be
$\bigcup_{i\ge2} M_i$ and use the definition of an odd-set cover to prove
$V^\prime_i \subseteq V_i$.  Since the edges in a
matching are pairwise disjoint, we obtain $\abs{V^\prime_i} = 2 \cdot
\abs{M_i}$.  Note also that $\abs{V^\prime_i}$ must be even since
$\abs{M_i}$ is a natural number.  Thus, we can prove that $\abs{M_i} \le
\left\lfloor\abs{V^\prime_i}/\,2\right\rfloor$, and hence, $\abs{M_i} \le
\left\lfloor\abs{V^\prime_i} /\,2\right\rfloor \le \left\lfloor
\abs{V_i}/\,2 \right\rfloor = \lfloor n_i/2 \rfloor$.

Instantiating this Isabelle proof for the data structures and properties
exported from VCC is fairly straightforward since both formalizations have
been chosen intentionally close to each other. We prove by induction that
\mbox{$N\ \{0\ ..\!\!< n\}\ L\ l$} equals \inlinevcc{\\label_count(L, l, n)} for
every label $l$. Moreover, we prove by induction that $\mathit{weight}\ \{2\ ..\ k\}\ f$ equals
\inlinevcc{\\full_weight(L, n, k)} if $f\ l$ and
\inlinevcc{\\label_count(L, l, n)} coincide. After showing that $\abs{M}$
equals \inlinevcc{M.num_edges}, we can establish the witness property for
the matching checker.

\section{Evaluation}
  \label{sec:evaluation}

We have shown that our methodology allows us to lift the trustworthiness of
certifying algorithms to a new level: one can obtain formal instance
correctness of certifying algorithms. A certifying algorithm returns for
any  input $\overline{x}$ satisfying the precondition $\Pre$ a pair
$(\overline{y},\overline{w})$. It is accompanied by a witness 
predicate $\Wit$ and an associated checker program $C$. If $\overline{x}$
satisfies the precondition, $C$ is supposed to halt on input
$(\overline{x}, \overline{y},\overline{w})$ and decide on the witness
predicate. We have shown in three cases that checker correctness can be
established formally using VCC. Precondition plus witness predicate are
supposed to imply the postcondition $\Post$. We have shown in the same
three cases that the witness property can be established formally using
Isabelle/HOL. In all cases, we followed the same approach: (1) write the
checker in C and annotate it so that VCC can establish that the checker
decides the witness property, (2) translate precondition, witness predicate,
and postcondition from VCC to Isabelle/HOL, and (3) prove the witness
property in Isabelle/HOL and add the VCC-version of it as an axiom to VCC.
We translated manually from VCC to Isabelle/HOL.  Since the translation is
purely syntactical, it can be automated.

We believe that our approach would work for all certifying algorithms
discussed in the survey on certifying
algorithms~\cite{mcconnell-etal:2011:certifying}.  This might require 
formalizing the subject matter in Isabelle/HOL first. Our work profited from
the recent formalization of graphs in 
Isabelle/HOL~\cite{noschinski:2011:graph-library}. A similar effort would
be necessary before geometric or randomized algorithms can be tackled.

It is always a gratifying experience to see that a formalization reveals
gaps in paper and pen reasoning. In our case, we found that the
checker for a maximum cardinality matching in LEDA does not verify that the
computed matching $M$ is a subgraph of $G$.

The matching algorithm for general graphs and its efficient implementation
are advanced topics in graph algorithms. The algorithm is highly complex 
and is not covered in the standard textbooks on algorithms.
The following page numbers illustrate the complexity gap between the
original algorithm and the checker: In the LEDA book, the description of
the algorithm for computing the maximum cardinality matching and the proof
of its correctness fills about 15 pages, compared to a one-page description
of the checker implementation.

All described theorems and lemmas have been formally verified using 
VCC and Isabelle/HOL. Table~\ref{tab:effort} summarizes our effort in 
lines of code, lines of annotations, and lines of Isabelle specifications and
proofs.  We observe an average ratio of 2.5 for the VCC annotation overhead, 
which includes our specifications for the witness predicates besides annotations 
for code verification. The overhead grows to an average ratio of 6.7 when 
also considering our Isabelle proofs. We believe that this overhead is a small 
price to pay for verifying full functional
correctness of code with nontrivial mathematical properties. Observe that
the Isabelle linking proofs are much shorter than the high-level proofs in
the Isabelle locales for the two more complex checkers. Linking proofs,
hence, cause only a small overhead and, as we found, are fairly 
straightforward.  In our opinion, splitting the proofs into high-level
proofs for mathematical concepts and lower-level linking proofs
improved productivity and certainly helped to reduce the overall proof size
and effort.  The overall VCC proof time for all checkers is 10 seconds on a
2.9 GHz Intel Core i7 machine. The overall Isabelle proof time is also well
below one minute on the same machine.

\begin{table}
\centering
\begin{tabular}{l@{\hspace{2.5em}}c@{\hspace{2.5em}}c@{\hspace{2.5em}}ccc}
& & \multicolumn{1}{c@{\hspace{3em}}}{VCC} & \multicolumn{3}{c}{Isabelle} \\[0.5ex]
& C code & Annotations & Exports & Locales & Links \\[3pt]
\hline\\[-0.5em]
Connected components & \phantom{0}61 & 162 & \phantom{0}63 & 102 & 163 \\[0.5ex]
Shortest paths       & 107 & 318 & 109 & 279 & 198 \\[0.5ex]
Matching             & 124 & 263 & \phantom{0}78 & 318 & 164 \\[0.5em]
\hline
\end{tabular}


\caption{Lines of C code, annotations for VCC, and Isabelle declarations
  and proofs, excluding empty lines in all cases. For the latter, we give the 
  numbers for the specifications exported from VCC, abstract proofs performed 
  within locales, and proofs that link the concrete representation exported from
  VCC with the abstract proofs separately.}
  \label{tab:effort}
\end{table}

It took several months to develop the framework and complete the first example
as described in~\cite{alkassar:2011:vcc}. For this paper, we have reworked the
framework, thereby strengthening and simplifying it at the same time. We
now prove total correctness of the checkers and not only partial
correctness. Moreover, we now establish that the checker accepts a triple
$(x,y,w)$ if and only if $w$ is a valid witness for output $y$, whereas
previously, we only proved one direction. The simplification results from
dropping concrete specifications in VCC.  
Moreover, a new version of VCC has allowed for shorter specifications and proofs.

Our methodology is not confined to verifying certifying computations only.
Rather, any proposition that VCC fails to establish can be exported to
Isabelle and proved there; see \cite[Section 4.5]{boehme:2012:thesis} for an 
example. Hence, our methodology extends the applicability of VCC and allows, in
conjunction with Isabelle, the verification of complicated problems that were 
infeasible up to now.

\section{Related Work}
  \label{sec:related-work}

The notion of a certifying algorithm is ancient. Al-Khawarizmi 
already described how to (partially) check the correctness of a multiplication in
his book on algebra. The extended Euclidean algorithm for greatest common
divisors is also certifying; it dates back to the 17th century.  Yet, formal
verification of checkers is recent.

In 1997, Bright et al.~\cite{bright-etal:1997:certifier} verified a checker
for a sorting algorithm that has been formalized in the Boyer-Moore theorem
prover~\cite{boyer-moore:1990:prover}. De Nivelle and Piskac formally
verified the checker for priority queues implemented in
LEDA~\cite{nivelle-piskac:2005:checker}.  Bulwahn et
al.~\cite{bulwahn-etal:2008:imperative} describe a verified SAT checker,
\ie a checker for certificates of unsatisfiability produced by a SAT
solver.  They develop the checker and prove its correctness within Isabelle/HOL.
Similar proof checkers have been formalized in the
Coq~\cite{bertot-etal:2004:coq} proof
assistant~\cite{darbari-etal:2010:sat,armand-etal:2010:sat}.
CeTA~\cite{thiemann-sternagel:2009:ceta}, a tool for certified termination
analysis, is also based on formally verified checkers.  In contrast to our approach, all mentioned checkers are
entirely developed and verified within the language of a theorem prover.

VCC has been applied to verify tens of thousands of lines of C code. So far, the majority of its verification targets have been restricted to system-level code from the
domain of microkernels and
hypervisors~\cite{Baumann-etal:2009:PikeOS,klein-etal:2010:sel4,Shi-etal:2012:Orientais}.
Our work extends the range of VCC applications to graph algorithms and, in
general, to any code that requires nontrivial mathematical reasoning to
establish full functional correctness.

In contrast to using an automatic program verifier such as VCC as we did,
imperative code can also be verified in interactive theorem provers such as
Coq, HOL~\cite{gordon-melham:1993:hol}, or Isabelle/HOL.  This requires a
formalization of the imperative language and its semantics within the
theorem prover.
Chargu{\'e}raud describes a tool, CFML, that is based on the idea of
extracting characteristic formulae from
programs~\cite{chargueraud:2011:cfml}. CFML is embedded in Coq and targets
imperative Caml programs. It has been applied to verify imperative data
structures such as mutable lists, sparse arrays and
union-find.
Norrish presented a formal semantics of C formalized in the HOL theorem
prover~\cite{norrish:1998:c}. Parallel to this work, a subset of C, called
C0, was formalized in Isabelle/HOL~\cite{leinenbach-etal:2005:c0}. Schirmer 
developed a verification environment for sequential imperative programs 
within Isabelle and embedded C0 into this
environment~\cite{schirmer:2006:thesis};  his verification environment is written in 
the generic imperative programming language Simpl. Shirmer's work has been applied, for
instance, to verify a compiler for C0~\cite{petrova:2007:thesis}.
More recently, the seL4 microkernel, written in C, has been
verified in Isabelle/HOL~\cite{klein-etal:2010:sel4} based on the work of
Norrish and Schirmer. The underlying approach is refinement; one 
verifies an intermediate implementation in Haskell against the abstract 
specification, then generates C code from the Haskell code using a verified 
generator. The reverse process of turning C code into a 
 human readable abstraction can partly be
automated~\cite{greenaway-etal:2012:gap}. The latter work opens up an
alternative path for the verification of certifying computations. One
writes the checker in C, uses the trusted translation to obtain an equivalent
Simpl program, and then proves checker correctness and the witness property in
Isabelle/HOL. We plan to also try this alternative route. There are advantages and
disadvantages for both approaches, and it is too early to claim the superiority of one
approach over the other. An advantage of our current approach is that we annotate 
the source program; a disadvantage is that we use two tools. An advantage of the 
alternative approach is that only one tool is used; the disadvantage then is that one 
cannot argue about the source program but rather about a program obtained by 
a translation process.

Previous work that proposes, as we do, the use of interactive theorem provers as
backends to code verification systems comprises, for instance, the link
between Boogie and Isabelle/HOL~\cite{boehme-etal:2010:holboogie} and the
link between Why and Coq~\cite{filliatre-marche:2007:why}.
Both systems have a C verifier frontend.  Such approaches for
connecting code verifiers and proof assistants usually give the latter the same
information that is made available to the first-order engine, overwhelming
the users of the proof assistants with a mass of detail.  Instead, we allow
only clean chunks of mathematics to move between the verifier and the proof
assistant. This hides details of the underlying
programming languages from the proof assistant, thus requiring the user to discharge
only interesting proof obligations.

Shortest-path algorithms, especially imperative implementations thereof,
are popular as case studies for demonstrating code
verification~\cite{chargueraud:2011:cfml,nordhoff-lammich:2010:dijkstra,boehme-etal:2008:holboogie}.
They target full functional correctness as opposed to
instance correctness. Verifying instance correctness is orthogonal to verifying
the implementation of a particular shortest path algorithm. 
Our work is directly applicable to any implementation of shortest-path that is instrumented to provide the necessary witness expected by our checker.

To our knowledge, there has been no other attempt to verify
algorithms or checkers for connected components or maximum cardinality
matchings.

\section{Conclusion and Future Work}
  \label{sec:conclusions}

We have described a framework for the verification of certifying computations and
applied it to three nontrivial combinatorial problems: connectivity of
graphs, shortest paths in graphs, and maximum cardinality matchings in
graphs. Our work greatly increases the trustworthiness of certifying algorithms.

Specifically, for each instance of the considered three problems, we can
now give a formal proof of the correctness of the result. Thus, the user
has neither to trust the implementation of the original algorithm nor the
checker, nor does he have to understand why the witness property holds. 
We
stress that we did not prove the correctness of the original programs but
only verified the results of their computations.

Our methodology
can be applied to any other problem for which a certifying algorithm is
known; see~\cite{mcconnell-etal:2011:certifying} for a survey.
The fourth author is currently verifying a checker for shortest paths with arbitrary edge costs~\cite{rizkallah:2012:shortest}, and 
Lars Noschinski is verifying a checker for graph non-planarity testing.

Our methodology is not restricted to verifying certifying computations.
The integration of VCC and Isabelle/HOL should be useful whenever
verification of a program requires nontrivial mathematical reasoning.

\begin{acknowledgements}
We thank Ernie Cohen for his advice on VCC idioms, Mark Hillebrand for
insightful discussions on VCC, and Norbert Schirmer for his initial Isabelle
support.  Moreover, we thank Lars Noschinski for developing a powerful graph
library in Isabelle/HOL. We are grateful to Jasmin Christian Blanchette for providing 
useful comments on an earlier version of this article.
\end{acknowledgements}

\bibliographystyle{spmpsci}
\bibliography{journal}

\begin{thebibliography}{10}
\providecommand{\url}[1]{{#1}}
\providecommand{\urlprefix}{URL }
\expandafter\ifx\csname urlstyle\endcsname\relax
  \providecommand{\doi}[1]{DOI~\discretionary{}{}{}#1}\else
  \providecommand{\doi}{DOI~\discretionary{}{}{}\begingroup
  \urlstyle{rm}\Url}\fi

\bibitem{alkassar:2011:vcc}
Alkassar, E., B{\"o}hme, S., Mehlhorn, K., Rizkallah, C.: Verification of
  certifying computations.
\newblock In: Computer Aided Verification, \emph{Lecture Notes in Computer
  Science}, vol. 6806, pp. 67--82. Springer (2011)

\bibitem{armand-etal:2010:sat}
Armand, M., Gr{\'e}goire, B., Spiwack, A., Th{\'e}ry, L.: Extending {Coq} with
  imperative features and its application to {SAT} verification.
\newblock In: Interactive Theorem Proving, \emph{Lecture Notes in Computer
  Science}, vol. 6172, pp. 83--98. Springer (2010)

\bibitem{barnett-etal:2006:boogie}
Barnett, M., Chang, B.Y.E., DeLine, R., Jacobs, B., Leino, K.R.M.: Boogie: A
  modular reusable verifier for object-oriented programs.
\newblock In: Formal Methods for Components and Objects, \emph{Lecture Notes in
  Computer Science}, vol. 4111, pp. 364--387. Springer (2006)

\bibitem{Baumann-etal:2009:PikeOS}
Baumann, C., Beckert, B., Blasum, H., Bormer, T.: Formal verification of a
  microkernel used in dependable software systems.
\newblock In: Computer Safety, Reliability, and Security, \emph{Lecture Notes
  in Computer Science}, vol. 5775, pp. 187--200. Springer (2009)

\bibitem{bertot-etal:2004:coq}
Bertot, Y., Cast{\'e}ran, P.: Interactive Theorem Proving and Program
  Development---{Coq'Art}: The Calculus of Inductive Constructions.
\newblock Texts in Theoretical Computer Science. An EATCS Series. Springer
  (2004)

\bibitem{blum-kannan:1989:check}
Blum, M., Kannan, S.: Designing programs that check their work.
\newblock In: Symposium on Theory of Computing, pp. 86--97. ACM (1989)

\bibitem{boehme:2012:thesis}
B\"ohme, S.: Proving theorems of higher-order logic with {SMT} solvers.
\newblock Ph.D. thesis, Technische Universit\"at M\"unchen (2012)

\bibitem{boehme-etal:2008:holboogie}
B{\"o}hme, S., Leino, K.R.M., Wolff, B.: {HOL-Boogie}---{A}n interactive prover
  for the {Boogie} program-verifier.
\newblock In: Theorem Proving in Higher Order Logics, \emph{Lecture Notes in
  Computer Science}, vol. 5170, pp. 150--166. Springer (2008)

\bibitem{boehme-etal:2010:holboogie}
B{\"o}hme, S., Moskal, M., Schulte, W., Wolff, B.: {HOL-Boogie}---an
  interactive prover-backend for the {V}erifying {C} {C}ompiler.
\newblock Journal of Automated Reasoning \textbf{44}(1--2), 111--144 (2010)

\bibitem{boyer-moore:1990:prover}
Boyer, R.S., Moore, J.S.: A theorem prover for a computational logic.
\newblock In: Conference on Automated Deduction, \emph{Lecture Notes in
  Computer Science}, vol. 449, pp. 1--15. Springer (1990)

\bibitem{bright-etal:1997:certifier}
Bright, J.D., Sullivan, G.F., Masson, G.M.: A formally verified sorting
  certifier.
\newblock IEEE Transactions on Computers \textbf{46}(12), 1304--1312 (1997)

\bibitem{bulwahn-etal:2008:imperative}
Bulwahn, L., Krauss, A., Haftmann, F., Erk{\"o}k, L., Matthews, J.: Imperative
  functional programming with {I}sabelle/{HOL}.
\newblock In: Theorem Proving in Higher Order Logics, \emph{Lecture Notes in
  Computer Science}, vol. 5170, pp. 134--149. Springer (2008)

\bibitem{chargueraud:2011:cfml}
Chargu{\'e}raud, A.: Characteristic formulae for the verification of imperative
  programs.
\newblock In: International Conference on Functional Programming, pp. 418--430.
  ACM (2011)

\bibitem{cohen-etal:2009:vcc}
Cohen, E., Dahlweid, M., Hillebrand, M., Leinenbach, D., Moskal, M., Santen,
  T., Schulte, W., Tobies, S.: {VCC}: {A} practical system for verifying
  concurrent {C}.
\newblock In: Theorem Proving in Higher Order Logics, \emph{Lecture Notes in
  Computer Science}, vol. 5674, pp. 23--42. Springer (2009)

\bibitem{darbari-etal:2010:sat}
Darbari, A., Fischer, B., Marques-Silva, J.: Industrial-strength certified
  {SAT} solving through verified {SAT} proof checking.
\newblock In: Theoretical Aspects of Computing, \emph{Lecture Notes in Computer
  Science}, vol. 6255, pp. 260--274. Springer (2010)

\bibitem{edmonds:1965:matching}
Edmonds, J.: Maximum matching and a polyhedron with 0,1-vertices.
\newblock Journal of Research of the National Bureau of Standards \textbf{69B},
  125--130 (1965)

\bibitem{filliatre-marche:2007:why}
Filli{\^a}tre, J.C., March{\'e}, C.: The {Why/Krakatoa/Caduceus} platform for
  deductive program verification.
\newblock In: Computer Aided Verification, \emph{Lecture Notes in Computer
  Science}, vol. 4590, pp. 173--177. Springer (2007)

\bibitem{gordon-etal:1979:lcf}
Gordon, M., Milner, R., Wadsworth, C.P.: {Edinburgh LCF}: A Mechanised Logic of
  Computation, \emph{Lecture Notes in Computer Science}, vol.~78.
\newblock Springer (1979)

\bibitem{gordon-melham:1993:hol}
Gordon, M.J.C., Melham, T.F. (eds.): Introduction to HOL: A Theorem-Proving
  Environment for Higher-Order Logic.
\newblock Cambridge University Press (1993)

\bibitem{greenaway-etal:2012:gap}
Greenaway, D., Andronick, J., Klein, G.: Bridging the gap: Automatic verified
  abstraction of {C}.
\newblock In: Interactive Theorem Proving, \emph{Lecture Notes in Computer
  Science}, vol. 7406, pp. 99--115. Springer (2012)

\bibitem{klein-etal:2010:sel4}
Klein, G., Andronick, J., Elphinstone, K., Heiser, G., Cock, D., Derrin, P.,
  Elkaduwe, D., Engelhardt, K., Kolanski, R., Norrish, M., Sewell, T., Tuch,
  H., Winwood, S.: {seL4}: Formal verification of an operating-system kernel.
\newblock Communications of the ACM \textbf{53}(6), 107--115 (2010)

\bibitem{leinenbach-etal:2005:c0}
Leinenbach, D., Paul, W.J., Petrova, E.: Towards the formal verification of a
  {C0} compiler: Code generation and implementation correctness.
\newblock In: Software Engineering and Formal Methods, pp. 2--12. IEEE Computer
  Society (2005)

\bibitem{mcconnell-etal:2011:certifying}
McConnell, R.M., Mehlhorn, K., N{\"a}her, S., Schweitzer, P.: Certifying
  algorithms.
\newblock Computer Science Review \textbf{5}(2), 119--161 (2011)

\bibitem{melhorn-naeher:1999:leda}
Mehlhorn, K., N\"aher, S.: The {LEDA} Platform for Combinatorial and Geometric
  Computing.
\newblock Cambridge University Press (1999)

\bibitem{moura-bjorner:2008:z3}
de~Moura, L.M., Bj{\o}rner, N.: {Z3}: An efficient {SMT} solver.
\newblock In: Tools and Algorithms for the Construction and Analysis of
  Systems, \emph{Lecture Notes in Computer Science}, vol. 4963, pp. 337--340.
  Springer (2008)

\bibitem{nipkow-etal:2002:isabelle}
Nipkow, T., Paulson, L.C., Wenzel, M.: Isabelle/HOL --- A Proof Assistant for
  Higher-Order Logic, \emph{Lecture Notes in Computer Science}, vol. 2283.
\newblock Springer (2002)

\bibitem{nivelle-piskac:2005:checker}
de~Nivelle, H., Piskac, R.: Verification of an off-line checker for priority
  queues.
\newblock In: Software Engineering and Formal Methods, pp. 210--219. IEEE
  Computer Society (2005)

\bibitem{nordhoff-lammich:2010:dijkstra}
Nordhoff, B., Lammich, P.: Dijkstra's shortest path algorithm.
\newblock Archive of Formal Proofs \textbf{2012} (2012).
\newblock \url{http://afp.sourceforge.net/entries/Dijkstra_Shortest_Path.shtml}

\bibitem{norrish:1998:c}
Norrish, M.: C formalised in {HOL}.
\newblock Ph.D. thesis, Computer Laboratory, University of Cambridge (1998)

\bibitem{noschinski:2011:graph-library}
Noschinski, L.: A formalization of graphs in {Isabelle/HOL} (2011).
\newblock Ongoing work, available from author upon request

\bibitem{petrova:2007:thesis}
Petrova, E.: Verification of the {C0} compiler implementation on the source
  code level.
\newblock Ph.D. thesis, Saarland University, Saarbr{\"u}cken (2007)

\bibitem{rizkallah:2011:matching}
Rizkallah, C.: Maximum cardinality matching.
\newblock Archive of Formal Proofs \textbf{2011} (2011).
\newblock \url{http://afp.sourceforge.net/entries/Max-Card-Matching.shtml}

\bibitem{rizkallah:2012:shortest}
Rizkallah, C.: An axiomatic characterization of shortest-path in {Isabelle/HOL}
  (2012).
\newblock Formal proof development,
  \url{http://www.mpi-inf.mpg.de/~crizkall/Publications/shortest_path.pdf}

\bibitem{schirmer:2006:thesis}
Schirmer, N.: Verification of sequential imperative programs in {Isabelle/HOL}.
\newblock Ph.D. thesis, Technische Universit{\"a}t M{\"u}nchen (2006)

\bibitem{Shi-etal:2012:Orientais}
Shi, J., He, J., Zhu, H., Fang, H., Huang, Y., Zhang, X.: {ORIENTAIS}: Formal
  verified {OSEK/VDX} real-time operating system.
\newblock In: Engineering of Complex Computer Systems, pp. 293--301. IEEE
  Computer Society (2012)

\bibitem{sullivan-masson:1990:certification}
Sullivan, G.F., Masson, G.M.: Using certification trails to achieve software
  fault tolerance.
\newblock In: Fault-Tolerant Computing, pp. 423--431. IEEE Computer Society
  (1990)

\bibitem{thiemann-sternagel:2009:ceta}
Thiemann, R., Sternagel, C.: Certification of termination proofs using {CeTA}.
\newblock In: Theorem Proving in Higher Order Logics, \emph{Lecture Notes in
  Computer Science}, vol. 5674, pp. 452--468. Springer (2009)

\bibitem{verisoftxt}
Verisoft {XT}.
\newblock \url{http://www.verisoftxt.de} (2010)

\end{thebibliography}

\end{document}